\newtheorem{fact}[theorem]{Fact}
\newcommand{\broadcast}{{\it BROADCAST}\xspace}
\newcommand{\compute}{{\it COMPUTE}\xspace}
\newcommand{\computecircuit}{{\it COMPUTE-CIRCUIT}\xspace}
\newcommand{\triggercheck}{{\it TRIGGER-CHECK}\xspace}
\newcommand{\advancedcheck}{{\it CHECK}\xspace}
\newcommand{\elect}{{\it ELECT}\xspace}
\newcommand{\request}{{\it REQUEST}\xspace}
\newcommand{\recompute}{{\it RECOMPUTE}\xspace}
\newcommand{\resendresult}{{\it RESEND-RESULT}\xspace}
\newcommand{\update}{{\it UPDATE}\xspace}
\newcommand{\investigate}{{\it INVESTIGATE}\xspace}
\newcommand{\markconflicts}{{\it MARK-IN-CONFLICTS}\xspace}
\newcommand{\circuitsize}{m}
\newcommand{\logstarcircuitsize}{\log^*{\circuitsize}}
\newcommand{\chkprobinv}{(\logstarcircuitsize)^2}
\newcommand{\asymchkrows}{O(\logstarcircuitsize)}
\newcommand{\realchkrows}{16\logstarcircuitsize}
\newcommand{\receiver}{{\bf r}\xspace}
\newcommand{\inactivenode}{inactive}
\newcommand{\activenode}{active}
\newcommand{\neutralnode}{neutral}
\newcommand{\paperTitle}{Self-Healing Computation}
\newcommand{\paperTitleAbbr}{Self-Healing Computation}
\newcommand{\authorsAbbr}{G. Saad and J. Saia}
\date{}
\begin{document}

\mainmatter  

\author{George Saad \and Jared Saia}

\institute{Department of Computer Science, University of New Mexico,\\
\email{\{saad,saia\}@cs.unm.edu}
}

\title{\paperTitle
\footnote{
This research is partially supported by NSF Award SATC 1318880.
}
}

\titlerunning{\paperTitleAbbr}

\authorrunning{\authorsAbbr}
\tocauthor{George Saad and  Jared Saia}

\toctitle{\paperTitleAbbr}
\maketitle

\begin{abstract}
In the problem of reliable multiparty computation (RC), there are $n$
parties, each with an individual input, and the parties want to
jointly compute a function $f$ over $n$ inputs.  The problem is
complicated by the fact that an omniscient adversary controls a hidden
fraction of the parties.

We describe a self-healing algorithm for this problem.  In particular, for a fixed function $f$, with $n$ parties and $m$ gates, we describe how to perform RC repeatedly as the inputs to $f$ change.  Our algorithm maintains the following
properties, even when an adversary controls up to $t \leq (\frac{1}{4}
- \epsilon) n$ parties, for any constant $\epsilon >0$.
First, our algorithm performs each reliable computation with the following amortized resource costs: $O(m + n \log n)$ messages, $O(m + n \log n)$ computational operations, and $O(\ell)$ latency, where $\ell$ is the depth of the circuit that computes $f$.
Second, the expected total number of corruptions is $O(t \chkprobinv )$, after which the adversarially controlled parties are effectively quarantined so that they cause no
more corruptions.
\keywords{Self-Healing Algorithms, Threshold Cryptography, Leader Election}
\end{abstract}

\section{Introduction}
How can we protect a network against adversarial attack?  A traditional approach provides robustness through redundant components.  If one component is attacked, the remaining components maintain functionality.  Unfortunately, this approach incurs significant resource cost, even when the network is not under attack.

An alternative approach is self-healing, where a network automatically recovers from attacks.  Self-healing algorithms expend additional resources only when it is necessary to repair from attacks. 

In this paper, we describe self-healing algorithms for the problem of \textit{reliable multiparty computation (RC)}.  In the RC problem, there are $n$
parties, each with an individual input, and the parties want to
jointly compute a function $f$ over $n$ inputs.  A hidden $1/4$-fraction of the parties are controlled by an omniscient Byzantine adversary.  A party that is controlled by the adversary is said to be \emph{bad}, and the remaining parties are said to be \emph{good}.  Our goal is to ensure that all good parties learn the output of $f$. \footnote{Note that RC differs from secure multiparty computation (MPC) only in that there is no requirement to keep inputs private.}

RC abstracts many problems that may occur in high-performance computing, sensor networks, and peer-to-peer networks.  For example, we can use RC to enable performance profiling and system monitoring, compute order statistics, and enable public voting. 

Our main result is an algorithm for RC that 1) is asymptotically optimal in terms of total messages and total computational operations; and 2) limits the expected total number of corruptions.  Ideally, each bad party would cause $O(1)$ corruptions; in our algorithm, each bad party causes an expected $O(\chkprobinv)$ corruptions.

\subsection{Our Model}

We assume a static Byzantine adversary that takes over $t \leq (\frac{1}{4} - \epsilon) n$ parties before the algorithm begins, for any constant $\epsilon >0$. 
As mentioned previously, parties that are compromised by the adversary are called \emph{bad}, and the remaining parties are \emph{good}. The bad parties may arbitrarily deviate from the protocol, by sending no messages, excessive numbers of messages, incorrect messages,  or any combination of these.  The good parties follow the protocol.  We assume that the adversary knows our protocol, but is unaware of the random bits of the good nodes.  We make use of a public key cryptography scheme, and thus assume that the adversary is computationally bounded.

We assume a partially synchronous communication model. Any message sent from one good node to another good node requires at most $h$ time steps to be sent and received, and the value $h$ is known to all nodes.  However, we allow the adversary to be \emph{rushing}: the bad nodes receive all messages from good nodes in a round before sending out their own messages.  We further assume that each party has a unique ID.  We say that party $p$ has a link to party $q$ if $p$ knows $q$'s ID and can thus directly communicate with node $q$.

In the reliable multiparty computation problem, we assume that the function $f$ can be implemented with an arithmetic circuit over $m$ gates, where each gate has two inputs and at most two outputs.\footnote{We note that any gate of any fixed in-degree and out-degree can be converted into a fixed number of gates with in-degree $2$ and out-degree at most $2$.}  
For simplicity of presentation, we focus on computing a single function multiple times (with changing inputs).  However, we can also compute multiple functions with our algorithm.

\subsection{Our Result}
We describe an algorithm, \compute, to efficiently solve reliable multiparty computation.  Our main result is summarized in the following theorem. 

\begin{theorem}\label{thm:corruptions}
Assume we have $n$ parties providing inputs to a function $f$ that can be computed by an arithmetic circuit with depth $\ell$ and containing $m$ gates.  Then \compute solves RC and has the following properties.
\begin{itemize}
\item In an amortized sense,\footnote{In particular, if we call \compute $\mathcal{L}$ times, then the expected total number of messages sent will be $O(\mathcal{L} (m + n\log n) + t(m \log^{2} n))$.  Since $t$ is fixed, for large $\mathcal{L}$, the expected number of messages per \compute is $O(m + n\log{n})$.  The result for computational operations is similar.} any execution of \compute requires the following expected costs:
\begin{itemize}
\item $O(m + n \log n)$ messages sent by all parties,
\item $O(m + n \log n)$ computational operations performed by all parties, and
\item $O(\ell)$ latency.
\end{itemize}
\item The expected total number of times \compute returns a corrupted output is $O(t \chkprobinv)$.
\end{itemize}
\end{theorem}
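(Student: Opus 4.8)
The plan is to decompose the theorem into two largely independent claims: the amortized resource bounds and the bound on the expected number of corrupted outputs. I would first fix the distributed infrastructure that makes both claims tractable: organize the $n$ parties into quorums of size $\Theta(\log n)$ so that, since $t \le (\tfrac14 - \epsilon)n$, every quorum contains a supermajority of good parties with high probability, and map each of the $m$ gates of the circuit to a quorum. Using threshold cryptography, each quorum then acts as a single semi-trusted unit that produces signed gate values, and leader election within quorums reduces the per-gate communication. With this in place, \computecircuit propagates values through the $\ell$ layers of the circuit, one quorum handing its signed output to the quorums of its successor gates.

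For the cost bounds, I would analyze a single fault-free execution of \compute first. Each gate is evaluated by its quorum at $O(1)$ amortized message and computation cost via its elected leader, giving $O(m)$ over the whole circuit; the remaining $O(n \log n)$ accounts for the once-per-call overhead of distributing inputs, leader election, and collecting the output through \broadcast. Latency is $O(\ell)$ because values flow through $\ell$ circuit layers, each resolved in $O(1)$ rounds under the partially synchronous model. The extra work performed by \triggercheck, \advancedcheck, \investigate, \recompute, and \markconflicts is then charged separately: each invocation of the expensive recompute-and-verify path costs $O(m \log^2 n)$, and I would show (using the corruption analysis) that across $\mathcal{L}$ calls this path is triggered only $O(t)$ times, so the total overhead is $O(t \, m \log^2 n)$ and amortizes away as $\mathcal{L} \to \infty$, yielding the footnoted bound.

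The corruption bound is the heart of the argument, and I would prove it with a charging/potential argument in two steps. First, \emph{soundness}: whenever \compute returns a corrupted output and \advancedcheck fires, \investigate together with \markconflicts must place at least one genuinely bad party into a new conflict, and, crucially, no good party is ever permanently blamed. This relies on the good supermajority in each quorum and on signatures (the computationally bounded adversary cannot forge them) to make accusations verifiable. Second, \emph{progress}: each bad party can be placed in only $O(1)$ conflicts before it is quarantined and causes no further corruptions, which bounds the total number of \emph{detected} corruptions by $O(t)$. The remaining factor comes from the detection probability: I would show that each corruption is independently caught by \advancedcheck with probability $\Omega\!\big(1/\chkprobinv\big)$, so that in expectation a bad party produces $O(\chkprobinv)$ corruptions per detection event; combining the two steps via Wald's identity (or an optional-stopping argument over the martingale of accumulated conflicts) gives the claimed $O(t \, \chkprobinv)$ expected total.

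The main obstacle will be establishing the $\Omega(1/\chkprobinv)$ detection probability against a rushing adversary that chooses \emph{when} to corrupt. Because the adversary knows the protocol but not the good parties' random bits, the checking decisions must remain hidden until after the adversary commits to a corruption; I would enforce this with cryptographic commitments and threshold-shared randomness so that the probability a given corruption triggers \advancedcheck is genuinely $\Omega(1/\chkprobinv)$ and independent of the adversary's choices. Making the recursive $\asymchkrows$-row checking structure yield exactly this probability, while simultaneously keeping its expected cost inside the $O(m + n\log n)$ budget and ensuring the conflict-marking never stalls, is the delicate point on which the whole bound rests.
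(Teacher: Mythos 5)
Your top-level decomposition matches the paper's: amortized resource bounds, plus a corruption bound obtained as (number of detections needed to quarantine all bad parties) times (expected number of corruptions per detection); your soundness/progress split corresponds to Lemma~\ref{l:update} and the potential-function Lemma~\ref{l:numc}. However, there are two genuine gaps. First, the cost accounting is wrong as stated: \advancedcheck is not triggered $O(t)$ times --- \triggercheck fires it with probability $1/\chkprobinv$ on \emph{every} call to \compute, hence roughly $\mathcal{L}/\chkprobinv$ times over $\mathcal{L}$ calls, each invocation costing $O((m+n\log n)\chkprobinv)$. The amortization works because the product of the trigger probability and the per-invocation cost is $O(m+n\log n)$, not because the check is absolutely rare; only \update (with \investigate and \markconflicts) is invoked $O(t)$ times, and that is what contributes the $t\,m\log^{2}n$ term in the footnote. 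Indeed, if \advancedcheck really ran only $O(t)$ times in total, your claimed per-corruption detection probability of $\Omega(1/\chkprobinv)$ could not hold across all $\mathcal{L}$ calls.

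Second, and more importantly, you leave unproved exactly the step the whole bound rests on: that $\asymchkrows$ rounds of re-election and recomputation let \advancedcheck expose a corruption with probability at least $1/2$. The paper's mechanism is not commitments or hidden checking decisions in the sense you describe (the trigger is simply an MPC coin flip in the output quorum); it is a combinatorial shrinking argument. A sustained corruption must live inside a ``deception DAG'' of all-bad freshly elected checkers; this DAG never grows from round to round (Fact~\ref{fact:supergraph}); and Lemma~\ref{lem:shrinkingDeceptionDAG} shows that when each node of a bounded-indegree R-DAG survives with probability $p\le 1/d-\epsilon$, no surviving rooted subgraph of size $\Omega(\log m)$ remains with probability at least $1/2$ --- so the deception DAG drops from $m$ to $O(\log m)$ to $O(\log\log m)$ and reaches size zero after $O(\log^{*}m)$ successful rounds (Fact~\ref{fact:shrinktozero}, Lemma~\ref{lem:any-deception-dag-shrinks-to-zero}). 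Without this argument the constant detection probability, and hence the $O(t\chkprobinv)$ bound, is unsupported. A smaller inaccuracy: good parties \emph{can} be marked (and later unmarked), so ``no good party is ever permanently blamed'' and ``each bad party is in $O(1)$ conflicts before quarantine'' do not hold literally; the paper instead tracks the potential $f(b,g)=b-\frac{p}{1-p}g$, which rises by a constant per detection and is at most $t$ once all bad parties are marked.
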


\subsection{Technical Overview}
Our algorithms make critical use of quorums and a quorum graph. 

\smallskip
\noindent
\textbf{Quorums and the Quorum Graph:}  
We define a quorum to be a set of $\Theta(\log{n})$ parties, of which at most $1/4$-fraction are bad.
Many results show how to create and maintain a network of quorums~\cite{FS2,hildrum2003,NW03,scheideler:how,FSY,AS4,king2011load}.  All of these results maintain what we will call a \emph{quorum graph} in which each vertex represents a quorum.
The properties of the quorum graph are:
1) each party is in $\Theta(\log n)$ quorums; 
2) for any quorum $Q$, any party in $Q$ can communicate directly to any other party in $Q$; and 
3) for any quorums $Q$ and $Q'$ that are connected in the quorum graph, any party in $Q$ can communicate directly with any party in $Q'$ and vice versa.
Moreover, we assume that for any two parties $x$ and $y$ in a quorum, $x$ knows all quorums that $y$ is in.

\begin{figure}[t]
\centerline{\includegraphics[scale=0.4,natwidth=475,natheight=318]{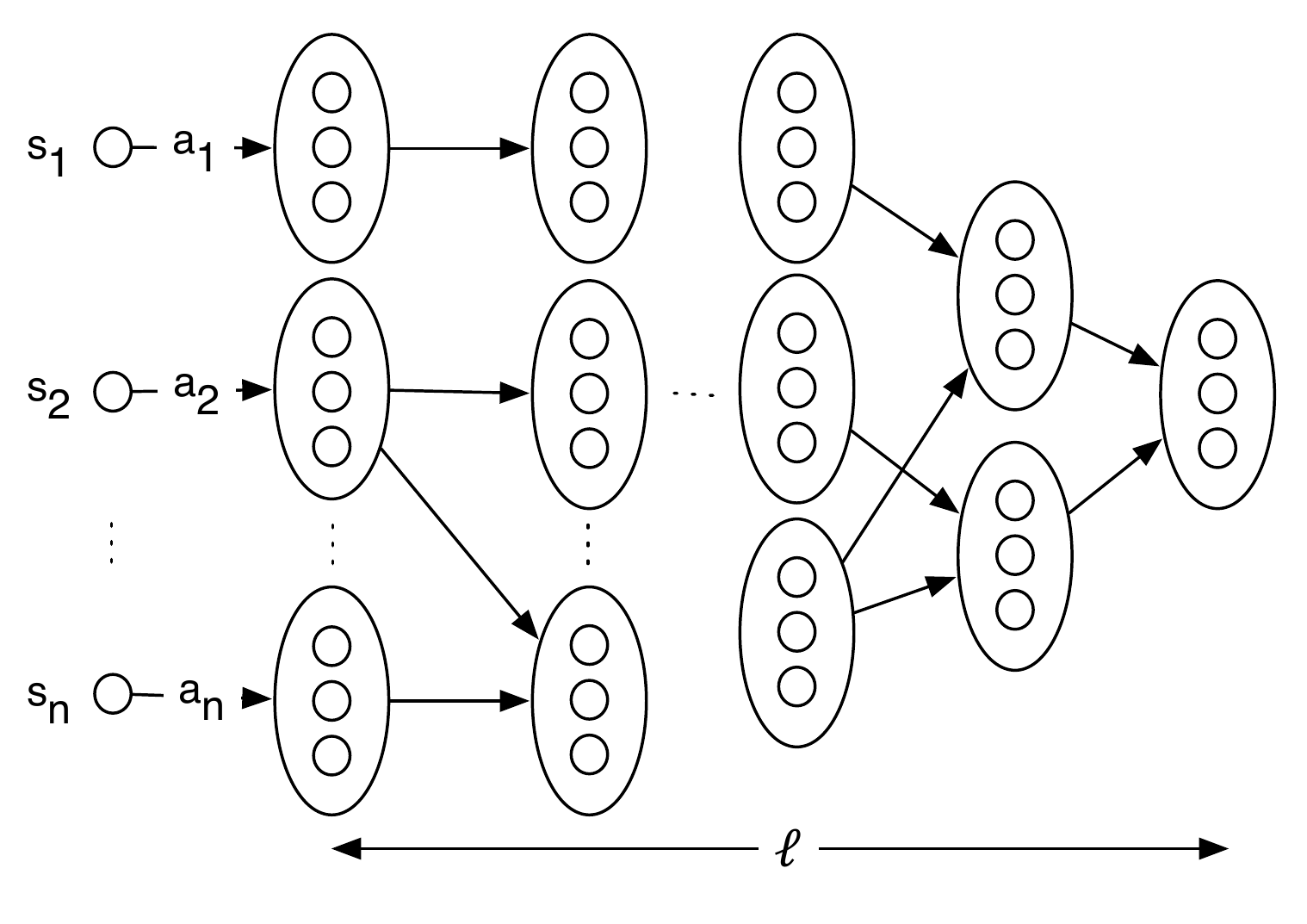}}
\caption{Quorum-Based Circuit}
\label{fig:quorumgraph}
\end{figure}

\smallskip
\noindent
\textbf{Computing with Quorums:}  We maintain a quorum graph with $m+n$ nodes: $m$ nodes for the gates of the circuit and $n$ nodes for the inputs of the parties.  The input nodes are connected to the gates using these inputs, and the gate nodes are connected as in the circuit.  Quorums are mapped to nodes in this quorum graph as described above.  See Figure \ref{fig:quorumgraph}.  Intuitively, the computation is performed from the left to the right, where the input quorums are the leftmost quorums and the output quorum is the rightmost one.

A correct but inefficient way to solve RC is as follows.  Each party $s_i$ sends its input to all parties of the appropriate input quorum.  Then the computation is performed from left to right. All parties in each quorum compute the appropriate gate operation on their inputs, and send their outputs to all parties in the right neighboring quorums via all-to-all communication.  At the next level,
all parties in each quorum take the majority of the received messages in order to determine the correct input for their gate.  At the end, the parties in the rightmost quorum will compute the correct output of the circuit.  They then forward this output back from right to left through the quorum graph using the same all-to-all communication and majority filtering. 

Unfortunately, this naive algorithm requires $O(m \log^{2} n)$ messages and $O(m\log{n})$ computational operations.  Our main goal is to remove the logarithmic factors.  \footnote{We note that such asymptotic improvements can be significant for large networks.  For example, if $n = 64{,}000$, then we would expect our algorithm to reduce message costs by a factor of $\log^2 n = 255$.}

\smallskip
\noindent
\textbf{Leaders and the \advancedcheck Algorithm:}  A more efficient approach is for each quorum to have a leader, and for this leader to receive inputs, perform gate computations, and send off the output.  Unfortunately, a single bad leader can corrupt the entire computation.

To address this issue, we create \advancedcheck (Section~\ref{app:check}).  This algorithm determines if there has been a corruption, and if so, returns at least one pair of parties that are in \emph{conflict}.  Informally, we say that a pair of parties are in conflict if they each accuse the other of malicious behavior.  In such a situation, we know that at least one party in the pair is bad.  
Our approach is to mark both parties in each conflicting pair, and then to forbid any marked party from being a leader of a quorum.
\footnote{A technical point is that we may need to unmark all parties in a quorum if too many parties in that quorum become marked.  However, a potential function argument (Lemma~\ref{l:numc}) shows that it is still the case that after $O(t)$ markings, all bad parties will be marked.}

The basic idea of \advancedcheck is to have multiple rounds where in each round, a new party is selected independently at random from each quorum.  We call these parties the \emph{checkers}.  
For convenience of presentation, we will refer to the leaders as the checkers for round $0$.
For each round $i\geq1$, all $i$ checkers at gate $g$: 1) receive inputs to $g$ from all $i$ checkers at input gates for $g$; 2) compute the gate output for $g$ based on these inputs; and 3) send this output to all $i$ checkers at each output gate for $g$.  If a checker ever receives inconsistent inputs, it calls \update (Section~\ref{sec:update}), which will return at least one pair of parties that are in conflict.  Unfortunately, waiting until a round where each gate has had at least one good checker would require $O(\log n)$ rounds.

To do better, we use the following approach.  Let $G$ be the quorum graph as defined above and let the checkers be selected as above.  Call a subgraph of $G$ bad in a given round if all checkers in the nodes of that subgraph are bad in that round.  Any corruption in the first round must occur in a bad subgraph.  Moreover, for $i$ rounds of \advancedcheck to fail to find a corruption, there must be nesting levels of bad subgraphs in $G$ in each of those $i$ rounds.

When \advancedcheck elects a good checker at a quorum, it is as if it is removing the node associated with the quorum from the quorum graph.
Thus, we can view \advancedcheck as repeatedly removing nodes from increasingly smaller subgraphs of $G$ until no nodes remain, at which the corruption is detected.  
A key lemma (Lemma \ref{lem:shrinkingDeceptionDAG}) shows that for any directed acyclic graph (DAG) with $m$ nodes and maximum degree $4$, when each node is deleted with probability $3/4$, the probability that a connected subgraph of size $\Omega(\log{m})$ survives is at most $1/2$.  Using this lemma, we can show that \advancedcheck requires only $\asymchkrows$ rounds to detect a corruption with constant probability.\footnote{This probability can be made arbitrarily close to $1$ by adjusting the hidden constant in the $\asymchkrows$ rounds.}

Since \advancedcheck requires $O((m+n\log{n})\chkprobinv)$ messages, we can call it with probability $1/\chkprobinv$ and obtain asymptotically optimal resource costs for the RC problem, while incurring an expected $O(t \chkprobinv)$ corruptions.

\subsection{Related Work}

Our results are inspired by recent work on self-healing algorithms. Early work of \cite{Frisanco:1997,Iraschko:1998,Murakami:1997,Caenegem:1997,Xiong:1999} discusses different restoration mechanisms to preserve network performance by adding capacity and rerouting traffic streams in the presence of node or link failures. This work presents mathematical models to determine global optimal restoration paths, and provides methods for capacity optimization of path-restorable networks.

More recent work~\cite{boman2006brief,saia2008picking,hayes2008forgiving,hayes2009forgiving,pandurangan2011xheal,sarma2011edge} considers models where the following process repeats indefinitely: an adversary deletes some nodes in the network, and the algorithm adds edges.  The algorithm is constrained to never increase the degree of any node by more than a logarithmic factor from its original degree.  In this model, researchers have presented algorithms that ensure the following properties: the network stays connected and the diameter does not increase by much~\cite{boman2006brief,saia2008picking,hayes2008forgiving}; the shortest path between any pair of nodes does not increase by much~\cite{hayes2009forgiving}; expansion properties of the network are approximately preserved~\cite{pandurangan2011xheal}; and keeping network backbones densely connected~\cite{sarma2011edge}.

This paper particularly builds on~\cite{KSS:2013}.  That paper describes self-healing algorithms that provide reliable communication, with a minimum of corruptions, even when a Byzantine adversary can take over a constant fraction of the nodes in a network.  While our attack model is similar to~\cite{KSS:2013}, reliable \textit{computation} is more challenging than reliable communication, and hence this paper requires a significantly different technical approach.  Additionally, we improve the fraction of bad parties that can be tolerated from $1/8$ to $1/4$.

Reliable multiparty computation (RC) is closely related to the problem of secure multiparty computation (MPC) which has been studied extensively for several decades (see e.g.~\cite{Yao:1982,Beaver:1992,Ben-Or:1988,Rabin:1989} or the recent book~\cite{prabhakaran2013secure}).  RC is simpler than MPC in that it does not require inputs of the parties to remain private.  Our algorithm for RC is significantly more efficient than current algorithms for MPC, which require at least polylogarithmic blowup in communication and computational costs in order to tolerate a Byzantine adversary.  We reduce these costs through our self-healing approach, which expends additional resources only when corruptions occur, and is able to ``quarantine" bad parties after $O(t \chkprobinv)$ corruptions.

\subsection{Organization of Paper}
The rest of this paper is organized as follows.  In Section~\ref{sec:algorithms}, we describe our algorithms.
The analysis of our algorithms is shown in Section \ref{sec:analysis}.
Finally, we conclude and describe problems for future work in Section~\ref{sec:conclusion}.

\section{Our Algorithms}
\label{sec:algorithms}
In this section, we describe our algorithms: \compute, \computecircuit, \advancedcheck and \update. 

Our algorithms aim at detecting corruptions and marking the bad parties.  Parties that are marked are not allowed to participate in the computation process.  Initially, all parties are unmarked.

Recall that there are $n$ parties, each provides an input to an input quorum, $Q_i$, for $1 \leq i \leq n$; and then the computation is performed through $m$ quorums $Q_j$, for $n+1 \leq j \leq m+n$. The result is produced at an output quorum $Q_{m+n}$.

Before discussing our main \compute algorithm, we describe that when a party $x$ broadcasts a message $msg$, signed by a quorum $Q$, to a set of parties $S$, it calls $\broadcast (msg, Q, S)$.

\subsection{\broadcast}

In \broadcast (Algorithm \ref{a:broadcast}), we use threshold cryptography to avoid the overhead of Byzantine Agreement. In a ($\eta, \eta'$)-threshold cryptographic scheme, a private key is distributed among $\eta$ parties in such a way that 1) any subset of more than $\eta'$ parties can jointly reassemble the key; and 2) no subset of at most $\eta'$ parties can recover the key.
The private key can be distributed using a \emph{Distributed Key Generation} (DKG) protocol\cite{Kate:2009}.  DKG generates the public/private key shares of all parties in every quorum.  The public key of each quorum is known to all parties in the quorum, and to all parties in all neighboring quorums in the circuit.


\begin{algorithm}
\caption{\textsc{BROADCAST}$(msg, Q, S)$ \Comment{A party $x$ calls this procedure in order to send a message $msg$, signed by quorum $Q$, to a set of parties $S$.}}\label{a:broadcast}
\begin{algorithmic}[1]
\State Party $x$ sends message $msg$ to all parties in $Q$.
\State Each party in $Q$ signs $msg$ by its private key share to obtain its message share.
\State Each party in $Q$ sends its message share back to party $x$.
\State Party $x$ interpolates at least $\frac{3|Q|}{4}$ message shares to obtain a signed-message of $Q$.
\State Party $x$ sends this signed-message to all parties in $S$.
\end{algorithmic}
\end{algorithm}

In particular, we use a ($|Q|$, $\frac{3|Q|}{4} -1$)-threshold scheme, where $|Q|$ is the quorum size.  A party $x$ calls $\broadcast$ in order to send a message $msg$ to all parties in $S$ so that: 1) at least $3/4$-fraction of the parties in quorum $Q$ have received the same message $msg$; 2) they agree upon the content of $msg$;  and 3) they give a permission to $x$ to broadcast this message.

Any call to \broadcast has $O(\log{n} + |S|)$ messages, $O(\log{n})$ computational operations for signing the message $msg$ by $O(\log{n})$ parties in $Q$ and $O(1)$ latency.

\subsection{\compute}
Now we describe our main algorithm, \compute (Algorithm~\ref{a:compute}), which calls \computecircuit (Algorithm \ref{a:compute-circuit}).
In \computecircuit, the $n$ parties broadcast their inputs to the input quorums. The input quorums forward these inputs to a circuit of $m$ leaders in order to perform the computation and provide the result to the output quorum. Then this result is sent through the same circuit of leaders back to all parties.
Recall that a leader of a quorum, is a party in this quorum, that is: 1) a representative of all parties in the quorum; and 2) it is known to all parties in the quorum and neighboring quorums.
We assume that all parties provide their inputs to the circuit in the same round.
\begin{algorithm}
\caption{\textsc{COMPUTE}\Comment{performs a reliable computation and sends the result reliably to all parties.}}
\label{a:compute}
\begin{algorithmic}[1]
\State COMPUTE-CIRCUIT \Comment{computes and sends back the result through a circuit of leaders.}
\State TRIGGER-CHECK \Comment{The output quorum triggers \advancedcheck with probability $1/\chkprobinv$.}
\end{algorithmic}
\end{algorithm}

In the presence of an adversary, \computecircuit is vulnerable to corruptions.  
Thus, \compute calls \triggercheck (Algorithm \ref{a:triggercheck}), in which the parties of the output quorum decide together, to trigger \advancedcheck (Algorithm \ref{a:check}) with probability $1/\chkprobinv$, using secure multiparty computation (MPC) \cite{Rabin:1989}.
\advancedcheck is triggered in order to detect if a computation was corrupted in the last call to \computecircuit, with probability at least $1/2$.

\begin{algorithm}
\caption{\textsc{COMPUTE-CIRCUIT}\Comment{performs a computation through a circuit of leaders producing a result at the output quorum; and it sends back result through same circuit to all parties.}}
\label{a:compute-circuit}
\begin{algorithmic}[1]
\For{$i = 1, \ldots, n$}\Comment{provides the inputs to the circuit}
	\State Party $s_i$ broadcasts its input $a_i$ to all parties in quorum $Q_i$.
	\State All parties in $Q_i$ send $a_i$ to the leaders of the right neighboring quorums of $Q_i$.
\EndFor
\For{$i = n+1, \ldots, m+n-1$} \Comment{performs the computation.}
	\ForAll{$j: i< j \leq m+n$ and $(Q_{i}, Q_{j}) \in Circuit$}
		\If{leader $q_{i} \in Q_{i}$ receives all its inputs}
			\State $q_i$ performs an operation on its inputs producing an output, $b_i$.
			\State $q_{i}$ sends $b_i$ to leader $q_{j} \in Q_{j}$.
		\EndIf
	\EndFor
\EndFor
\If{leader $q_{m+n} \in Q_{m+n}$ receives all its inputs} \Comment{produces and broadcasts result to output quorum.}
	\State $q_{m+n}$ performs an operation on its inputs producing an output, $b_{m+n}$.
	\State $q_{m+n}$ broadcasts $b_{m+n}$ to all parties in $Q_{m+n}$.
\EndIf

\For{$i = m+n, \ldots, n+2$} \Comment{sends back the result to the leftmost leaders.}
	\ForAll{$j: n+1\leq j< i$ and $(Q_{j}, Q_{i}) \in Circuit$}
		\State Leader $q_{i} \in Q_i$ sends $b_{m+n}$ to leader $q_{j} \in Q_{j}$.
	\EndFor
\EndFor
\For{$i = 1, \ldots, n$}\Comment{sends result to all parties after broadcasting it to the input quorums.}
	\State The leaders of the right neighboring quorums of $Q_i$ broadcast $b_{m+n}$ to all parties in $Q_{i}$.
	\State All parties in $Q_i$ send $b_{m+n}$ to sender $s_i$.
\EndFor
\end{algorithmic}
\end{algorithm}

\begin{algorithm}
\caption{\textsc{TRIGGER-CHECK} \Comment{The parties of the output quorum $Q_{m+n}$ trigger \advancedcheck with probability $1/\chkprobinv$.}}
\label{a:triggercheck}
\begin{algorithmic}[1]
	\State Each party in $Q_{m+n}$ chooses an input: a real number uniformly distributed between $0$ and $1$.
	\State The parties of $Q_{m+n}$ perform MPC to find the output, $prob$, which is the sum of all their inputs divided by $|Q_{m+n}|$.
	\If{$prob \leq 1/\chkprobinv$}
		\State CHECK
	\EndIf
\end{algorithmic}
\end{algorithm}

Unfortunately, while \advancedcheck can determine if a corruption occurred, it does not specify the location where the corruption occurred.  
Thus, if \advancedcheck detects a corruption, \update (Algorithm~\ref{a:update}) is called.   
When \update is called, it identifies two neighboring quorums $Q$ and $Q'$ in the circuit, such that at least one pair of parties in these quorums is in conflict and at least one party in this pair is bad. 
Then quorums $Q$ and $Q'$ mark these parties and notify all other quorums that these parties are in.
All quorums in which these parties are notify their neighboring quorums.
For each pair of leaders that is in conflict, their quorums elect a new pair of unmarked leaders uniformly at random.
If $(1/2 - \gamma)$-fraction of parties in any quorum have been marked, for any constant $\gamma > 0$, e.g., $\gamma = 0.01$, they are set unmarked in all their quorums and neighboring quorums.

Moreover, we use \broadcast in \computecircuit and \advancedcheck in order to handle any accusation issued in \update against the parties that provide the inputs to the input quorums, or those that receive the result in the output quorum.

Our model does not directly consider concurrency.  
In a real system, concurrent executions of \compute that overlap at a single quorum may allow the adversary to achieve multiple corruptions at the cost of a single marked bad party.  However, this does not effect correctness, and, in practice, this issue can be avoided by serializing concurrent executions of \compute.  For simplicity of presentation, we leave the concurrency aspect out of this paper.

\subsection{\advancedcheck}\label{app:check}

In this section, we describe \advancedcheck algorithm, which is stated formally as Algorithm~\ref{a:check}. In this algorithm, we make use of subquorums, where a subquorum is a subset of unmarked parties in a quorum.
Let $U_{k}$ be the set of all unmarked parties in quorum $Q_{k}$, for $1\leq k \leq m+n$.

\begin{algorithm}
\caption{$\textsc{CHECK}$ \Comment{Party $\receiver$ calls \advancedcheck to check for corruptions.}}
\label{a:check}
\begin{algorithmic}[1]
	\State Let subquorums, $S^1_{j}$, $S^2_{j}$ and $S^3_j$, be initially empty, for all $n+1\leq j \leq m+n$.
	\For{$i \gets 1, \ldots, \realchkrows$}
		\State ELECT($Q_{m+n}$) \Comment{elects a party $\receiver \in Q_{m+n}$.}
		\State Party $\receiver$ constructs $R_1$, $R_2$ and $R_3$ to be three $m$ by $m'$ arrays of random numbers, where $m'$ is the maximum size of any quorum. 
Note that $R_1[k, k']$, $R_2[k, k']$ and $R_3[k, k']$ are three uniformly random numbers between $1$ and $k'$, for $1\leq k \leq m$ and $1\leq k' \leq m'$.
		\State REQUEST($R_1, R_2, i$) \Comment{$\receiver$ requests all senders to recompute.}
		\State RECOMPUTE($R_2, i, \receiver$) \Comment{recomputes producing the result, $b_{m+n}$, at $\receiver$.}
		\State RESEND-RESULT($R_3, b_{m+n}, i$) \Comment{$\receiver$ sends back $b_{m+n}$ to all parties.}
	\EndFor
\end{algorithmic}
\textbf{Note that:} for each algorithm that is called in \advancedcheck, if any party has previously received a message, it verifies this message with all subsequent messages; also if a party receives inconsistent messages or fails to receive an expected message, then it initiates a call to \update. 
\end{algorithm}

\begin{algorithm}
\caption{\textsc{ELECT}$(Q)$ \Comment{Parties in $Q$ elect an unmarked party in $Q$ using MPC.}}
\label{a:elect}
\begin{algorithmic}[1]
	\State Let each party in the set of unmarked parties, $U \subset Q$, is assigned a unique integer from 1 to $|U|$.
	\State Each party in $Q$ chooses an input: an integer uniformly distributed between 1 and $|U|$.
	\State The parties of $Q$ perform MPC to find the output: the sum of all their inputs modulo $|U|$.
	\State The party in $U$ associated with this output number is the elected party.
\end{algorithmic}
\end{algorithm}

\advancedcheck runs for $\asymchkrows$ rounds.
For each round $i$, the parties of the output quorum $Q_{m+n}$ elect an unmarked party $\receiver$ from $Q_{m+n}$ to be in charge of the recomputation in round $i$. This election process is stated formally in \elect (Algorithm \ref{a:elect}).
The elected party $\receiver$ calls \request (Algorithm \ref{a:request}) to send a request through a DAG of subquorums, $S^1_{j}$'s, to the $n$ senders to recompute.
Then the recomputation starts by \recompute (Algorithm \ref{a:recompute}), in which each sender that receives such request provides its input to redo the computation through a DAG of subquorums, $S^2_{j}$'s.
Finally, when $\receiver$ receives the result of the computation, it calls \resendresult (Algorithm \ref{a:resend-result}) in order to send back this result to the senders through a DAG of subquorums $S^3_{j}$'s, for $n+1\leq j \leq m+n$.

\begin{algorithm}
\caption{\textsc{REQUEST}($R_1, R_2, i$) \Comment{$\receiver$ requests $n$ senders through a DAG of subquoums, $S^1_j$'s, for $n+1\leq j \leq m+n$, to redo the computation.}}
\label{a:request}
\begin{algorithmic}[1]
	\State $\receiver$ sets $REQ$ to be a message consisting of $R_1, R_2, i$ and $\receiver$.
	\State $\receiver$ broadcasts $REQ$ to all parties of quorum $Q_{m+n}$.
	\State All parties in $Q_{m+n}$ calculate party, $q_{m+n} \in U_{m+n}$, of index $R_1[m,|U_{m+n}|]$ to be added to $S^1_{m+n}$.
	\State The parties in $Q_{m+n}$ send $REQ$ to the parties of $S^1_{m+n}$.
	\For{$j \gets m+n, \ldots, n+2$} \Comment{forwards $REQ$ from output quorum to input quorums.}
		\ForAll{$k: n+1\leq k<j$ and $(Q_k, Q_j) \in Circuit$} 
			\State All $i$ parties in $S^1_{j}$ calculate party, $q_{k} \in U_{k}$, of index $R_1[k-n,|U_{k}|]$ to be added to $S^1_{k}$.
			\State The parties in $S^1_{j}$ forward $REQ$ and the IDs of all parties in $S^1_{k}$ to party $q_{k}$.
			\State Party $q_{k}$ sends $REQ$ to all the parties in $S^1_{k}$.
		\EndFor
	\EndFor
	\For{$k \gets n, \ldots, 1$}\Comment{Input quorums forward $REQ$ to all senders.}
		\State All $i$ parties in the right neighboring subquorums of $Q_k$ broadcast $REQ$ to all parties in $Q_{k}$.
		\State All parties in $Q_k$ send $REQ$ to sender $s_k$.	
	\EndFor
\end{algorithmic}
\end{algorithm}

Moreover, in \elect($Q$), the parties of quorum $Q$ use MPC to elect an unmarked party uniformly at random from $Q$. Note that at any moment at least half of the unmarked parties in $Q$ are good, thus the elected party is good with probability at least $1/2$. Finally, this election protocol runs in $O(1)$ time, and requires $O(\log^2{n})$ messages and $O(\log{n})$ computational operations.

\begin{algorithm}
\caption{\textsc{RECOMPUTE}($R, i, \receiver$) \Comment{The $n$ senders provide inputs to a DAG of subquorums, $S^2_j$'s, for $n+1\leq j \leq m+n$, in order to recompute producing a result, $b_{m+n}$, at $\receiver$.}}
\label{a:recompute}
\begin{algorithmic}[1]
	\For{each sender $s_j$ that receives $REQ$; $1\leq j \leq n$ and $n+1\leq k,k'\leq m+n$} \Comment{provides the inputs.}
		\State $s_j$ broadcasts its input $a_j$, $i$ and $R$ to all parties in $Q_j$.
		\State All parties in $Q_j$ calculate at most two right neighboring parties, $q_{k}\in U_k$ and $q_{k'} \in U_{k'}$, of indices $R[k-n,|U_{k}|]$ and $R[k'-n,|U_{k'}|]$ to be added to the right neighboring subquorums, $S^2_{k}$ and $S^2_{k'}$.
		\State All parties in $Q_j$ send $a_j$, $i$ and $R$ to the parties in $S^2_{k}$ and the parties in $S^2_{k'}$.
	\EndFor
	\For{$j \gets n+1, \ldots, m+n-1$} \Comment{recomputes}
		\ForAll{$k: j< k \leq m+n$ and $(Q_j,Q_k) \in Circuit$} 
			\If{the new added party, $q_{j} \in S^2_{j}$, receives all its inputs}
				\State $q_{j}$ performs an operation on its inputs producing an output $b_{j}$.
				\State $q_{j}$ sends $b_{j}$, $i$ and $R$ to the parties of $S^2_{j}$.
				\State All $i$ parties in $S^2_{j}$ calculate party, $q_{k} \in U_k$, of index $R[k-n,|U_{k}|]$ to be added to $S^2_{k}$.
				\State The parties in $S^2_{j}$ send $b_{j}$, $i$, $R$ and the IDs of all parties in $S^2_{k}$ to party $q_{k}$.
			\EndIf
		\EndFor
	\EndFor
	\State All $i$ parties in $S_{m+n}$ broadcast $b_{m+n}$, $i$ and $R$ to all parties in $Q_{m+n}$.
	\State All parties in $Q_{m+n}$ send $b_{m+n}$, $i$ and $R$ to party $\receiver$. \Comment{$\receiver$ receives the result.}
\end{algorithmic}
\end{algorithm}

During \advancedcheck, if any party receives inconsistent messages or fails to receive and verify any expected message in any round, it initiates a call to \update.

\begin{algorithm}
\caption{\textsc{RESEND-RESULT}($R, b_{m+n}, i$) \Comment{Party $\receiver$ sends back the result of the computation, $b_{m+n}$, through a DAG of subquorums, $S^3_j$'s, to $n$ senders, for $n+1\leq j \leq m+n$.}}
\label{a:resend-result}
\begin{algorithmic}[1]
	\State $\receiver$ sets $RESULT$ to be a message consisting of $R,  b_{m+n}, i$ and $\receiver$.
	\State $\receiver$ broadcasts $RESULT$ to all parties of quorum $Q_{m+n}$.
	\State All parties in $Q_{m+n}$ calculate party, $q_{m+n} \in U_{m+n}$, of index $R[m,|U_{m+n}|]$ to be added to $S^3_{m+n}$.
	\State The parties in $Q_{m+n}$ send $RESULT$ to the parties of $S^3_{m+n}$.
	\For{$j \gets m+n, \ldots, n+2$}\Comment{forwards result from output quorum to input quorums.}
		\ForAll{$k: n+1\leq k<j$ and $(Q_k, Q_j) \in Circuit$} 
			\State All $i$ parties in $S^3_{j}$ calculate party, $q_{k} \in U_{k}$, of index $R[k-n,|U_{k}|]$ to be added to $S^3_{k}$.
			\State The parties in $S^3_{j}$ forward $RESULT$ and the IDs of all parties in $S^3_{k}$ to party $q_{k}$.
			\State Party $q_{k}$ sends $RESULT$ to all the parties in $S^3_{k}$.
		\EndFor
	\EndFor
	\For{$k \gets n, \ldots, 1$} \Comment{Input quorums forward the result to all senders.}
		\State All $i$ parties in the right neighboring subquorums of $Q_k$ broadcast $RESULT$ to all parties in $Q_k$.
		\State All parties in $Q_{k}$ send $RESULT$ to party $s_k$.
	\EndFor
\end{algorithmic}
\end{algorithm}

\subsection{\update}\label{sec:update}

When a computation is corrupted and \advancedcheck detects this corruption, \update is called. 
The \update algorithm is described formally as Algorithm~\ref{a:update}. 
When \update starts, all parties in each quorum in the circuit are notified.

\begin{algorithm}
\caption{$\textsc{UPDATE}$ \Comment{Party $q' \in Q'$ calls \update after it claims that it detects a corruption.}}
\label{a:update}
\begin{algorithmic}[1]
	\State $q'$ broadcasts the fact that it calls \update along with the messages it has received in this call to \compute to all parties in $Q'$.  
	\State The parties in $Q'$ verify that $q'$ received inconsistent messages before proceeding.
	\State $Q'$ notifies all quorums in the circuit via all-to-all communication that \update is called.
	\State INVESTIGATE \Comment{investigates all participants to determine corruption locations.}
	\State MARK-IN-CONFLICTS \Comment{marks the parties that are in conflict.}
\end{algorithmic}
\end{algorithm}

The main purpose of \update is to 
1) determine the location in which the corruption occurred; and 
2) mark the parties that are in conflict.

\begin{algorithm}
\caption{\textsc{INVESTIGATE}\Comment{investigates parties to determine corruption locations.}}
\label{a:investigate}
\begin{algorithmic}[1]
	\For{each party, $q$, involved in the last call to \computecircuit or \advancedcheck}
		\State $q$ compiles all messages they have received (and from whom) and they have sent (and to whom) in the last call to \computecircuit or \advancedcheck.
		\State $q$ broadcasts these messages to all parties in its quorum and neighboring quorums.
	\EndFor
\end{algorithmic}
\end{algorithm}

To determine the location in which the corruption occurred, \update calls \investigate (Algorithm \ref{a:investigate}) to investigate the current situation by letting each party involved in \computecircuit or \advancedcheck broadcast all messages they have received or sent.
Then, \update calls \markconflicts (Algorithm \ref{a:mark}) in order to mark the parties that are \emph{in conflict}, where a pair of parties is in conflict if at least one of these parties broadcasted messages that conflict with the messages broadcasted by the other party in this pair.
Note that each pair of parties that is in conflict has at least one bad party.
Recall that if $(1/2 - \gamma)$-fraction of parties in any quorum are marked, for any constant $\gamma>0$, e.g., $\gamma = 0.01$, they are set unmarked. 
Also, for each pair of leaders that get marked, their quorums elect another pair of unmarked leaders.

\begin{algorithm}
\caption{$\textsc{MARK-IN-CONFLICTS}$ \Comment{marks the parties that are in conflict.}}
\label{a:mark}
\begin{algorithmic}[1]
	\For{each pair of parties, $(q_x, q_y)$, that is in conflict*, in quorums $(Q_x, Q_y)$}
		\State party $q_y$ broadcasts a \emph{conflict} message, $\{q_x, q_y\}$, to all parties in $Q_y$.
		\State each party in $Q_y$ forwards $\{q_x, q_y\}$ to all parties in $Q_x$.
		\State all parties in $Q_x$ (or $Q_y$) send $\{q_x, q_y\}$ to the other quorums that has $q_x$ (or $q_y$). 
		\State each quorum has $q_x$ or $q_y$ sends $\{q_x, q_y\}$ to its neighboring quorums.
	\EndFor
	\For{each party $q$ that receives conflict message $\{q_x, q_y\}$}
		\State $q$ marks $q_x$ and $q_y$ in its marking table.
	\EndFor
	\If{$(1/2-\gamma)$-fraction of parties in any quorum have been marked, for $\gamma = 0.01$}
		\State each of these parties is set unmarked in all its quorums.
		\State each of these parties is set unmarked in all its neighboring quorums.
	\EndIf
	\For{each pair of leaders, $(q_x, q_y)$, that is in conflict, in quorums $(Q_x, Q_y)$}
		\State ELECT($Q_x$) and ELECT($Q_y$) to elect a pair of unmarked leaders, $(q'_x,q'_y)$.
		\State $Q_x$ and $Q_y$ notify their neighboring quorums with $(q'_x,q'_y)$.
	\EndFor
\end{algorithmic}
* A pair of parties, $(q_x, q_y)$, is \emph{in conflict} if: 1) $q_x$ was scheduled to send an output to $q_y$ at some point in the last call to \computecircuit or \advancedcheck; and 2) $q_y$ does not receive an expected message from $q_x$ in \investigate, or $q_y$ receives a message in \investigate that is different than the message that it has received from $q_x$ in the last call to \computecircuit or \advancedcheck.
\end{algorithm}

\section{Analysis}\label{sec:analysis}
In this section, we prove Theorem~\ref{thm:corruptions}. All logarithms are base 2.
\begin{definition}
\emph{Rooted Directed Acyclic Graph (R-DAG)} is a DAG in which, for a vertex $u$ called the root and any other node $v$, there is at least one directed path from $v$ to $u$.  
\end{definition}

\begin{lemma}\label{lem:shrinkingDeceptionDAG}
Given any R-DAG, of size $n$, in which each node has indegree of at most $d$ and survives independently with probability at most $p$ such that $0 < p \leq \frac{1}{d} - \epsilon$, for any constant $\epsilon > 0$, then the probability of having a subgraph, rooted at some node, having only surviving nodes, of size $\Omega(\frac{\log{n}}{(1-pd)^2})$ is at most $1/2$.
\end{lemma}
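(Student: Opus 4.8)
The plan is to fix one candidate root at a time, bound the size of the largest all-surviving subgraph that can be rooted there, and then take a union bound over the at most $n$ choices of root. The governing fact is that $p \le \tfrac{1}{d} - \epsilon$ forces the expected number of surviving in-neighbors of any node to be at most $pd \le 1 - d\epsilon < 1$, which places us in a \emph{subcritical} regime; intuitively, surviving structure cannot propagate far against the in-edges.

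First I would fix a vertex $r$ and study its \emph{surviving in-component} $C_r$: the set of surviving nodes from which $r$ is reachable by a directed path of surviving nodes. Every all-surviving subgraph rooted at $r$ lies inside $C_r$, so it suffices to prove $\Pr[\,|C_r| \ge k\,] \le \tfrac{1}{2n}$ for $k = \Theta\!\big(\tfrac{\log n}{(1-pd)^2}\big)$. I would reveal $C_r$ by exploring backward from $r$ along in-edges, exposing the survival bit of a node only when it is first reached, and couple this exploration with a Galton--Watson process whose offspring law is $\mathrm{Bin}(d,p)$. Since each node has indegree at most $d$ and each in-neighbor survives independently with probability at most $p$, the number of freshly discovered surviving predecessors is stochastically dominated by $\mathrm{Bin}(d,p)$; and because the graph is a DAG, the only discrepancy from a tree is that a predecessor may already have been discovered, which can only shrink the count. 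Hence $|C_r|$ is stochastically dominated by the total progeny $T$ of this subcritical branching process.

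It then remains to bound $\Pr[T \ge k]$ with the correct rate, which is the crux. Using the random-walk representation of total progeny, if $X_1, X_2, \dots$ are the i.i.d.\ offspring counts in exploration order, then the number of active nodes after step $i$ equals $1 - i + \sum_{j \le i} X_j$, and reaching total size $k$ requires this quantity to remain positive through step $k-1$; in particular $\sum_{j=1}^{k-1} X_j \ge k-1$. Since $\sum_{j=1}^{k-1} X_j \sim \mathrm{Bin}\big(d(k-1),p\big)$ has mean $(k-1)pd$, this is an upper-tail event at relative deviation $\delta = \tfrac{1}{pd}-1 = \tfrac{1-pd}{pd}$, so a Chernoff bound gives, in the binding near-critical case $pd \ge \tfrac12$,
\[
\Pr[T \ge k] \;\le\; \exp\!\Big(-\tfrac{1}{3}(k-1)\,pd\,\delta^2\Big) \;=\; \exp\!\Big(-\tfrac{(k-1)(1-pd)^2}{3\,pd}\Big) \;\le\; \exp\!\big(-\tfrac{1}{3}(k-1)(1-pd)^2\big),
\]
and the same final bound holds in the more strongly subcritical case $pd < \tfrac12$ (there $\delta > 1$ gives exponent $\tfrac{1}{3}(k-1)(1-pd) \ge \tfrac{1}{3}(k-1)(1-pd)^2$). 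The square on $(1-pd)$ arises precisely because near criticality ($pd \to 1$) the relative deviation $\delta$ is small and the Chernoff exponent scales like $(\text{mean}) \cdot \delta^2$. Taking a union bound over the $n$ possible roots and choosing $k - 1 \ge \tfrac{6\ln n}{(1-pd)^2}$ makes the total probability at most $n \cdot n^{-2} \le \tfrac12$, which is the claim.

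I expect the two delicate points to be (i) the stochastic-domination coupling in the DAG, namely ensuring that merged predecessors and the re-use of survival bits are handled so that $|C_r|$ is genuinely dominated by $T$ with $\mathrm{Bin}(d,p)$ offspring; and (ii) extracting the exact $(1-pd)^2$ rate rather than a weaker rate linear in $(1-pd)$. A naive union bound over in-trees (counting roughly $(ed)^k$ rooted in-trees and weighting each by $p^k$, hence a factor $(edp)^k$ with $edp$ possibly exceeding $1$) only succeeds when $1-pd$ exceeds a fixed constant, so the branching-process tail is what lets the bound cover $1 - pd$ arbitrarily close to $0$.
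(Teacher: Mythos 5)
Your proposal is correct and follows essentially the same route as the paper: fix a root, explore its maximal surviving rooted subgraph, stochastically dominate the exploration by a $\mathrm{Bin}(d,p)$ branching process, apply the random-walk/total-progeny criterion $\sum_i X_i \ge k-1$, and finish with a Chernoff bound at relative deviation $\delta = \tfrac{1-pd}{pd}$ (yielding the $(1-pd)^2$ rate) plus a union bound over the $n$ roots. The paper phrases the exploration via an explicit active/inactive/neutral bookkeeping rather than Galton--Watson total progeny, but the argument and the resulting bound are the same up to constants.
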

\begin{proof} 
This proof makes use of the following three propositions, but first we define some notations.

Given an R-DAG, $D(V,E)$, with size $n$ and maximum indegree $d$, after each node survives independently with probability at most $p$ such that $0 < p \leq \frac{1}{d} - \epsilon$, for any constant $\epsilon > 0$, we explore $D$ to find a subgraph with only surviving nodes of size more than $k$, rooted at an arbitrary node $v$ (assuming that node $v$ survives). 

Let $D'(v)$ be the maximum subgraph of surviving nodes, rooted at node $v$.
Let each node in $D$ have a status, which is either \emph{\inactivenode}, \emph{\activenode} or \emph{\neutralnode}. During the exploration process, the status of nodes is changed. A node $x$ is \inactivenode ~if $x \in D'(v)$ and its children are explored determining which one is in $D'(v)$. A node $x$ is \activenode ~if $x \in D'(v)$ but its children are not explored yet. A node $x$ is \neutralnode ~if it is neither \activenode ~nor \inactivenode, i.e., node $x$ and its children are not explored yet.

The exploration process runs in at most $k>0$ steps. Initially, we set an arbitrary surviving node, $v$, \activenode ~and all other nodes \neutralnode.
At each step $i$, we choose an active node, $w_i$, in an arbitrary way, and we explore all its children. For all $(w_i,w'_i) \in E$ and $w'_i$ survives and is neutral, we set $w'_i$ active, otherwise $w'_i$ remains as it is.
Then, we set $w_i$ inactive. 
Note that at any step, if there is no active node, the exploration process terminates.
Now let $d_i$ be the maximum number of children of node $w_i$ for $1\leq i \leq k$, i.e., 
\[d_i = \left\{ 
\begin{array}{l l}
  deg(w_i) - 1 & \quad \mbox{if $w_i \in V - root(D)$,}\\
  deg(w_i) & \quad \mbox{$otherwise$.}\\ 
\end{array} \right.\]
where $deg(w_i)$ is the degree of node $w_i$ and $root(D)$ is the root node of $D$.
For $1\leq i \leq k$, let $X_i$ be a non-negative random variable for the number of surviving neutral children of $w_i$, and let $Y_i$ be a non-negative random variable for the number of surviving non-neutral children of $w_i$. Note that $Y_1 = 0$.
So $X_i$ follows a binomial distribution with parameters $(d_i - Y_i)$ and $p$, i.e., $X_i \sim Bin(d_i - Y_i, p)$.
Let $A_i$ be a non-negative random variable for the total number of active nodes after $i$ steps, for $1\leq i \leq k$.

\begin{proposition}\label{proposition:A_i}
$A_i = \left\{ 
\begin{array}{l l}
  \sum_{i=1}^{k} X_i - (k-1) & \quad \mbox{if $A_{i-1} > 0$,}\\
  0 & \quad \mbox{$otherwise$.}\\ 
\end{array} \right.$
\end{proposition}
\begin{proof}
Since the process starts initially with one active node $v$, $A_0 = 1$.
Now we have two cases of $A_{i-1}$ to compute $A_i$, $1\leq i \leq k$:\\
\textbf{Case 1 (process terminates before $i$ steps):} If $A_{i-1} = 0$, then $A_j = 0$ for $i\leq j\leq k$.\\
\textbf{Case 2 (otherwise):} If $A_{i-1} > 0$, then $A_i = A_{i-1} + X_i - 1$,
where after exploring $w_i$, the total number of active nodes is the number of new active nodes ($X_{i}$) due to the exploration of $w_i$ in addition to the total number of active nodes of previous steps ($A_{i-1}$) excluding $w_i$ that becomes inactive at the end of step $i$.
\qed
\end{proof}
Now let $|D'(v)|$ be the number of nodes in $D'(v)$. 
\begin{proposition}\label{proposition:Pr(|D'(v)| > k)}
$Pr(|D'(v)| > k) \leq Pr(\sum_{i=1}^{k} X_i \geq k).$
\end{proposition}

\begin{proof}
To prove this proposition, we first prove that $Pr(|D'(v)| > k) \leq Pr(A_k > 0).$
In order to do that, we prove that $|D'(v)| > k \implies A_k > 0$. 
If $|D'(v)| > k$, then the exploration process does not terminate before $k$ steps. This implies that after k steps, there are $k$ inactive nodes and at least one active node remains. This follows that $A_k > 0$.
Thus, we have 
$
Pr(|D'(v)| > k) \leq Pr(A_k > 0).
$

Now we prove that $Pr(A_k > 0) \leq Pr(\sum_{i=1}^{k} X_i - (k-1) > 0)$.
To do that, we prove that $A_k > 0 \implies \sum_{i=1}^{k} X_i - (k-1) > 0$.
If $A_k > 0$, then $A_j > 0$ for all $1\leq j\leq k$.
By Proposition \ref{proposition:A_i}, we obtain that $\sum_{i=1}^{j} X_i - (j-1) > 0$ for all $1\leq j\leq k$. This follows that 
$
Pr(A_k > 0) \leq Pr(\sum_{i=1}^{k} X_i - (k-1) > 0).
$

Now, we obtain
$
Pr(|D'(v)| > k) \leq Pr(\sum_{i=1}^{k} X_i - (k-1) > 0),
$
or equivalently, 
$
Pr(|D'(v)| > k) \leq Pr(\sum_{i=1}^{k} X_i > k - 1).
$
Since $k$ is a positive integer, we have 
$
Pr(|D'(v)| > k) \leq Pr(\sum_{i=1}^{k} X_i \geq k).
$
\qed
\end{proof}

\begin{proposition}\label{proposition:sumX_i}
$Pr(\sum_{i=1}^{k} X_i \geq k) \leq e^{-\frac{(1-pd)^2 k}{1+pd}}.$
\end{proposition}
\begin{proof}
To prove this proposition, we first make use of stochastic dominance. 
For $1 \leq i \leq k$, let $X_i^+ \sim Bin(d, p)$, and let $X_1^+, ..., X_k^+$ be independent random variables.
We know that $Y_i \geq 0$ and $d_i \leq d$ for $1 \leq i \leq k$. By Theorem (1.1) part (a) of \cite{Klenke:2010}, for all $1 \leq i \leq k$, $X_i^+$ first-order stochastically dominates $X_i$, i.e.,  $X_i^+$ is stochastically larger than $X_i$.
Hence, $\sum_{i=1}^{k} X_i^+$ is stochastically larger than $\sum_{i=1}^{k} X_i$.
Thus, we have
$
Pr(\sum_{i=1}^{k} X_i \geq k) \leq Pr(\sum_{i=1}^{k} X_i^+ \geq k).
$


Now let $S_k = \sum_{i=1}^{k} X_i^+$. By Chernoff bounds, for $\delta > 0$, we obtain
$$
Pr(S_k \geq (1+\delta)E(S_k))
\leq \left(\frac{e^\delta}{(1+\delta)^{(1+\delta)}}\right)^{E(S_k)}
\leq e^{-\frac{\delta^2}{2+\delta}E(S_k)}.
$$
We know that $S_k \sim Bin(kd, p)$.
Thus, $E(S_k) = pdk$.
Therefore, we have
$$
Pr(S_k \geq (1+\delta)pdk) \leq e^{-\frac{\delta^2}{2+\delta}pdk}.
$$
For $\delta = \frac{1-pd}{pd}$, we obtain
$
Pr(S_k \geq k) \leq e^{-\frac{(1-pd)^2 k}{1+pd}}.
$\qed
\end{proof}

Now by Propositions \ref{proposition:Pr(|D'(v)| > k)} and \ref{proposition:sumX_i}, we have
$
Pr(|D'(v)| > k) \leq e^{-\frac{(1-pd)^2 k}{1+pd}}.
$
We know that node $v$ survives with probability at most $p$. Thus, we obtain
$$
Pr(|D'(v)| > k) \leq p e^{-\frac{(1-pd)^2 k}{1+pd}}.
$$
Union bound over $n$ nodes, then the probability that there exists a subgraph of $D$, rooted at one node, having only surviving nodes of size more than $k$ is at most
$$
n Pr(|D'(v)| > k) \leq n p e^{-\frac{(1-pd)^2k}{1+pd}}.
$$
Note that $n p e^{-\frac{(1-pd)^2k}{1+pd}} \leq 1/2$, when $k \geq \frac{1+pd}{(1-pd)^2 \log{e}} \log(2pn)$. 
Thus, the probability of having such a subgraph of size more than $\frac{1+pd}{(1-pd)^2 \log{e}} \log(2pn)$, or equivalently, $\Omega\left(\frac{\log{n}}{(1-pd)^2}\right)$, is at most $1/2$.\qed
\end{proof}

\begin{corollary}\label{corr:half}
For any R-DAG, of size $n$, the probability of having a subgraph, rooted at one node, having only surviving nodes, of size at least $n/2$ is $o(1)$.
\end{corollary}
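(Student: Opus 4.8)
The plan is to reuse the exponential tail bound derived in the proof of Lemma~\ref{lem:shrinkingDeceptionDAG} (under the same hypotheses: maximum indegree $d$ and per-node survival probability at most $p \leq \frac1d - \epsilon$), rather than its final, weaker $\Omega(\log n)$ conclusion. That proof establishes, for a fixed root $v$, the bound $Pr(|D'(v)| > k) \leq p\, e^{-\frac{(1-pd)^2 k}{1+pd}}$, and hence, after a union bound over the $n$ candidate roots, that the probability that \emph{some} rooted all-surviving subgraph has size more than $k$ is at most $n\, p\, e^{-\frac{(1-pd)^2 k}{1+pd}}$. Since the event ``size at least $n/2$'' is the same as ``size more than $\lceil n/2\rceil - 1$'', I would simply instantiate this union bound at $k = \Theta(n)$.

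The only quantity to verify is that the exponent constant $c := \frac{(1-pd)^2}{1+pd}$ is a positive constant independent of $n$. The hypothesis $p \leq \frac1d - \epsilon$ gives $pd \leq 1 - \epsilon d$, so $1 - pd \geq \epsilon d > 0$ and $1 + pd \leq 2$, whence $c \geq \frac{(\epsilon d)^2}{2} > 0$. Substituting $k = \Theta(n)$ into the union bound then yields a quantity of the form $n\, p\, e^{-\Theta(n)}$, in which the exponential decay dominates the linear prefactor (recall $p \leq 1$); this tends to $0$ as $n \to \infty$, i.e.\ it is $o(1)$, which is precisely the claim.

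This argument faces no substantive obstacle once the exponential bound from Lemma~\ref{lem:shrinkingDeceptionDAG} is available; the only points requiring care are the off-by-one bookkeeping between ``size more than $k$'' and ``size at least $n/2$'', and the observation that $d$ (and therefore $c$) is to be treated as a constant, so that the bound collapses to $n\, e^{-\Theta(n)} = o(1)$. The contrast with the lemma is instructive: choosing $k = \Theta(\log n)$ only drives the bound down to a constant such as $1/2$, whereas the far larger choice $k = \Theta(n)$ drives it all the way to $o(1)$.
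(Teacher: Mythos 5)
Your proposal is correct and is exactly the intended derivation: the paper states Corollary~\ref{corr:half} without an explicit proof because it follows immediately from the union bound $n\,p\,e^{-\frac{(1-pd)^2 k}{1+pd}}$ established inside the proof of Lemma~\ref{lem:shrinkingDeceptionDAG}, instantiated at $k=\Theta(n)$ rather than $k=\Theta(\log n)$. Your verification that the exponent constant $\frac{(1-pd)^2}{1+pd}$ is bounded away from $0$ under the hypothesis $p\le\frac{1}{d}-\epsilon$, and the off-by-one handling of ``at least $n/2$'' versus ``more than $k$,'' are both sound.
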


Recall that in each round of \advancedcheck, the new added parties are chosen uniformly at random.
We define a \emph{deception DAG}, $D_i$, as a subgraph of the DAG containing the new added parties, that has the following properties: 1) it has only bad parties, of which the adversary makes use to corrupt the computation; 2) it receives all its inputs, and each input is provided correct by at least one good party; and 3) each output it produces is provided to at least one good party, and at least one output is dropped or corrupted.

When the adversary corrupts at least one output of a deception DAG in any round, it has to keep doing such corruption in all subsequent deception DAGs; otherwise, the good parties that expect to receive certain messages from the deception DAG will receive inconsistent messages and so they call \update. Even if the adversary keeps corrupting these outputs from a round to another, the corruption will be detected if the deception DAG shrinks to size zero at any round.

In the following facts, we show that 1) any deception DAG never expands in any direction from a round to another; and
2) it shrinks logarithmically from a round to another with probability at least $1/2$.
This will imply that any deception DAG shrinks to size zero in $\asymchkrows$ rounds with probability at least $1/2$.

%

\begin{fact}\label{fact:supergraph}
The deception DAG in any round is a supergraph of all subsequent deception DAGs.
\end{fact}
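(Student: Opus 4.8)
The plan is to prove the containment $D_{i+1} \subseteq D_i$ by contradiction, arguing that any \emph{expansion} of the deception DAG from round $i$ to round $i+1$ would force a good party to observe a cross-round inconsistency and hence call \update, contradicting the fact that a deception DAG is, by definition, a corruption that has so far evaded detection. Throughout I would work in the regime in which no call to \update has yet been triggered through round $i+1$, since otherwise the deception DAGs of these rounds are not defined.

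Two structural observations drive the argument. First, the subquorums $S^1_j, S^2_j, S^3_j$ only \emph{grow} across rounds: the party elected at node $j$ in round $r$ is added to the subquorum and remains in it for all rounds $r' > r$, and by the note at the end of \advancedcheck every such party verifies each newly received message against all messages it received in previous rounds, calling \update on any mismatch. Thus, once a good party witnesses the value carried on a given link in round $i$, the same good party is still present in round $i+1$ and will detect any change to that value. Second, by the defining properties of a deception DAG, in every round its input boundary is fed the \emph{correct} values by good parties, and each of its (possibly corrupted or dropped) outputs is delivered to at least one good party; moreover the senders re-supply the same inputs in \recompute each round, so the correct values entering the corrupted region are identical in rounds $i$ and $i+1$.

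First I would fix an arbitrary node $v \in D_{i+1}$ and show $v \in D_i$, which establishes the node-set containment (the edge-set containment then follows, since both DAGs are induced on the fixed circuit). Suppose instead $v \notin D_i$. Then in round $i$ the value carried on $v$'s outgoing link toward the good region was correct, whereas in round $i+1$ the bad party at $v$ corrupts or drops an output that, by property (3), is received by a good party. I would then identify a good party $w$ that witnessed this link in round $i$ and persists into round $i+1$: using the monotone growth of the relevant subquorum together with the boundary property that a good party receives this output, $w$ recorded the correct value in round $i$ and now receives the corrupted value in round $i+1$, so $w$ calls \update---a contradiction. The symmetric cases (a would-be \emph{backward} expansion that corrupts an input link, or a \emph{sideways} expansion to a parallel node) reduce to the same principle: the pinned correct input values and the persisting good witnesses forbid any link that was correct in round $i$ from becoming corrupted in round $i+1$ without detection.

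I expect the main obstacle to be the identification of the persisting good witness $w$ in the forward case: property (3) only guarantees a good receiver in round $i+1$, which could in principle be the freshly elected round-$(i+1)$ party and therefore carry no round-$i$ memory. The careful step is to argue---from the accumulation of subquorums and the fact that $v$'s successor already lay on the boundary between the corrupted and good regions in round $i$---that some good party receiving $v$'s output has been present since round $i$, or else to trace the newly corrupted value forward until it first meets such a persisting good witness. Once this witness is pinned down, the cross-round consistency check of \advancedcheck delivers the contradiction and the non-expansion claim follows.
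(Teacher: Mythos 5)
Your proposal matches the paper's proof in substance: both arguments rest on the accumulation of subquorums across rounds together with the cross-round consistency checks performed by good parties, so that the good parties bordering the deception DAG in round $i$ persist and would detect (and report via \update) any expansion in round $i+1$. The paper phrases this directly---the good border of round $i$ remains a border in all subsequent rounds---rather than by contradiction on a single node, but the mechanism is identical, and the subtle point you flag (the detecting witness must carry round-$i$ memory) is resolved exactly as you suggest: the newly elected checker broadcasts its output to the entire accumulated subquorum, which contains the earlier rounds' good members.
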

%
%

Let $p$ be the probability of selecting an unmarked bad party uniformly at random in any quorum. Recall that the fraction of bad parties in any quorum is at most $1/4$, and at any time the fraction of unmarked parties in any quorum is at least $1/2+\gamma$, for any constant $\gamma>0$.
Thus, the fraction of unmarked bad parties in any quorum is $\frac{1/2}{1+2 \gamma}$, i.e., $p \leq \frac{1/2}{1+2 \gamma}$.

\begin{fact}\label{fact:shrinktozero}
With probability at least $1/2$, any deception DAG, of size $m$, that is rooted at one party, shrinks to size zero in $O(\log^*{m})$ rounds, where $0 < p \leq \frac{1/2}{1+2 \gamma}$, for any constant $\gamma > 0$.
\end{fact}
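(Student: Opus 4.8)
The plan is to combine Lemma~\ref{lem:shrinkingDeceptionDAG} with the monotonicity guaranteed by Fact~\ref{fact:supergraph}. First I would instantiate the lemma with $d = 2$: since each gate has two inputs, every node of a deception DAG has indegree at most two, so a deception DAG is an R-DAG of maximum indegree $d = 2$. Moreover, a node of the next round's deception DAG must again be re-selected as an unmarked bad party, which happens independently with probability at most $p \le \frac{1/2}{1+2\gamma} < \frac12 = \frac1d$; hence $p \le \frac1d - \epsilon$ for $\epsilon = \frac{\gamma}{1+2\gamma} > 0$, and the hypothesis of Lemma~\ref{lem:shrinkingDeceptionDAG} is met with $(1-pd)^2$ a positive constant depending only on $\gamma$.

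Next I would establish the one-round shrink. Fix a round whose current deception DAG has size $s$. Because the parties are re-selected independently and uniformly at random each round, each node of the current deception DAG survives (is again an unmarked bad party) independently with probability at most $p$, and by Fact~\ref{fact:supergraph} the next deception DAG is a surviving rooted connected subgraph of the current one. Lemma~\ref{lem:shrinkingDeceptionDAG} then gives that, with probability at least $1/2$, every surviving rooted subgraph has size $O\!\left(\frac{\log s}{(1-pd)^2}\right) = O(\log s)$; I call such a round \emph{successful}. Thus a successful round shrinks the size from $s$ to at most $c\log s$ for a constant $c = c(\gamma)$, while an unsuccessful round, again by Fact~\ref{fact:supergraph}, at least does not increase the size.

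The key observation, which frees me from requiring the successful rounds to be consecutive, is that the size is monotone non-increasing (Fact~\ref{fact:supergraph}), so the shrinking maps compose: after $j$ successful rounds, interspersed with arbitrarily many unsuccessful ones, the size is at most the $j$-fold iterate of $s \mapsto c\log s$ applied to $m$. A routine calculation shows that this iterate falls to a constant after $\logstarcircuitsize + O(1)$ successful rounds, the constant factor $c$ adding only $O(1)$ steps to the iterated-logarithm count. To drive the size from a constant down to zero I would argue separately that a constant-size deception DAG fails to re-survive as a valid deception DAG with probability bounded away from zero in each round (using the small-$k$ tail of the bound $Pr(|D'(v)|>k)\le p\,e^{-\frac{(1-pd)^2 k}{1+pd}}$ established inside the proof of Lemma~\ref{lem:shrinkingDeceptionDAG}), so $O(1)$ further successful rounds finish the job. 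In total $K = \logstarcircuitsize + O(1)$ successful rounds suffice to shrink any deception DAG of size $m$ to size zero.

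Finally I would run a Chernoff argument over the $\realchkrows$ rounds of \advancedcheck. Each round is successful independently with probability at least $1/2$, so the expected number of successful rounds is at least $8\logstarcircuitsize$, while we need only $K \le 2\logstarcircuitsize$ of them for large $m$; a Chernoff bound then puts the number of successful rounds below $K$ with probability well under $1/2$, so with probability at least $1/2$ the deception DAG reaches size zero within $\realchkrows = \asymchkrows$ rounds. The main obstacle I anticipate is precisely this last step together with the final collapse: because $\logstarcircuitsize$ is extremely small, the concentration is weak, so the constant $16$ in $\realchkrows$ and the exact value of $c$ (hence of $K$) must be chosen carefully to guarantee the $1/2$ bound for all $m$; a secondary subtlety is making rigorous both that the iterate of $s \mapsto c\log s$ collapses to a constant in $\logstarcircuitsize + O(1)$ steps and that a constant-size deception DAG is eliminated in a bounded number of further rounds.
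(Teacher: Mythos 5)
Your proposal follows essentially the same route as the paper's proof: monotonicity of the deception DAG from Fact~\ref{fact:supergraph}, a per-round logarithmic shrink from Lemma~\ref{lem:shrinkingDeceptionDAG} with $d$ and $p$ instantiated so that $pd$ is bounded away from $1$, composition of the map $s \mapsto c\log s$ over the successful rounds to reach a constant size in $\logstarcircuitsize + O(1)$ successes, a separate argument for the constant-size endgame (the paper invokes Corollary~\ref{corr:half} where you use the exponential tail directly --- the same estimate), and a closing Chernoff bound over the $\realchkrows$ rounds. The one substantive omission is your claim that each round is successful independently with probability at least $1/2$: a round of \advancedcheck can only produce the shrink when the party $\receiver$ elected by the output quorum is good, which itself happens only with probability at least $1/2$, so the paper multiplies the two independent events and works with a per-round success probability of $1/4$, giving $E(X) = 4\logstarcircuitsize$ rather than your $8\logstarcircuitsize$. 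This does not break the argument --- with success probability $1/4$ over $\realchkrows$ rounds, the Chernoff bound with $\delta = 1/2$ still bounds the probability of fewer than $2\logstarcircuitsize$ successes by less than $1/2$ --- but the factor must be accounted for, since it is exactly the margin that the constant $16$ in $\realchkrows$ is chosen to absorb.
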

\begin{proof}
Given a deception DAG, of size $m$, that is rooted at one party.
By Fact \ref{fact:supergraph}, the deception DAG never expands over rounds.
Let $X_i$ be an indicator random variable, which is $1$ if the deception DAG shrinks logarithmically from a round $i$ to round $i+1$; and $0$ otherwise. 
Note that in each round, the receiver that is elected by the output quorum is good with probability at least $1/2$. 
By Lemma \ref{lem:shrinkingDeceptionDAG}, the probability that the deception DAG shrinks logarithmically is at least $1/2$ given that the elected receiver is good. Since the probability of electing a good receiver is independent of the probability that the deception DAG shrinks logarithmically, $X_i = 1$ with probability at least $1/4$.

Now we show the required number of the $X_{i}$ random variables to be $1$ in order to shrink the deception DAG of size $m$ to $0$.
Now fix $p$. Due to the constant factor, $c\leq\frac{3}{(1-pd)^2}$, of the logarithmic shrinking in Lemma \ref{lem:any-deception-dag-shrinks-to-zero}, then after having $(\log^*{m}-2)$ of $X_{i}$'s equal $1$, the deception DAG of size $m$ will shrink to size $2c(\log{2c})$. 
Moreover, Lemma \ref{lem:any-deception-dag-shrinks-to-zero} will not be applicable after the deception DAG shrinks to a constant size, $k\leq c\log{k}$.
For such case, we simply make use of Corollary \ref{corr:half}.

Thus, after having $(\log^*{m}-2)$ of $X_{i}$'s equal $1$, by Corollary \ref{corr:half}, we further require at most $\log({2c(\log{2c}})$, of $X_{i}$'s equal $1$ to eventually shrink the deception DAG to size $0$. This implies that we require at most $2\log^*{m}$ of the $X_{i}$ random variables to be $1$.

Let $X = \sum_{i=1}^{\realchkrows} X_i$. Then $E(X) = 4 \log^* m$, and since the $X_{i}$'s are independent, by Chernoff bounds,
$$
\Pr\left(X < (1-\delta)4\log^*m\right) \leq 
\left(\frac{e^\delta}{(1+\delta)^{1+\delta}}\right)^{4\log^*m}.
$$

For $\delta = \frac{1}{2}$ and $m > 2$, 
$
\Pr\left(X < 2 \log^*m\right) \leq 
\left(\frac{e^\frac{1}{2}}{(\frac{3}{2})^{\frac{3}{2}}}\right)^{4\log^*m} < \frac{1}{2}.
$
\qed
\end{proof}

\begin{lemma}\label{lem:any-deception-dag-shrinks-to-zero}
With probability at least $1/2$, any deception DAG, of size $m$, shrinks to size zero in $\asymchkrows$ rounds, where $0 < p \leq \frac{1/2}{1+2 \gamma}$, for any constant $\gamma > 0$.
\end{lemma}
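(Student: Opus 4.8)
The plan is to derive this Lemma from Fact~\ref{fact:shrinktozero} by removing its assumption that the deception DAG is rooted at a single party; this is the only gap between the two statements, since a general deception DAG may possess several output (root) nodes. First I would fix one corrupted output $r$ of the deception DAG $D$ (property~3 of a deception DAG guarantees at least one exists) and define $D_r \subseteq D$ to be the set of all nodes of $D$ having a directed path to $r$, with the induced edges. I then claim $D_r$ is itself a rooted deception DAG of size at most $\circuitsize$: it is rooted at $r$ and contains only bad parties, its output at $r$ is corrupted and delivered to a good party, and, crucially, every input to $D_r$ is supplied by a good party. This last point is the key structural observation: if a bad predecessor $u \notin D_r$ fed some $w \in D_r$, then $u$ would have a path to $r$ through $w$ and hence would lie in $D_r$, a contradiction; so all of $D_r$'s inputs cross the boundary of $D$ and, by property~2 of $D$, are provided correctly by good parties.

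Next I would invoke the monotonicity and shrinking machinery. By Fact~\ref{fact:supergraph}, deception DAGs never expand across rounds, so the corrupted root $r$ persists and $D_r$ only shrinks from round to round; equivalently, the adversary is forced to keep corrupting $r$'s output or be detected at once. Applying Fact~\ref{fact:shrinktozero} to the single-rooted deception DAG $D_r$ (of size at most $\circuitsize$) then yields that, with probability at least $1/2$, $D_r$ shrinks to size zero within $\asymchkrows$ rounds. Once $D_r$ reaches size zero, every directed path feeding $r$'s output contains a freshly elected good party, so the corrupted value can no longer be sustained, the good recipient of $r$'s output observes an inconsistency, and \update is triggered. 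Hence the deception fails within $\asymchkrows$ rounds with probability at least $1/2$.

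The main obstacle is the presence of \emph{multiple} corrupted outputs: a naive argument demanding that every rooted component of $D$ vanish simultaneously would require a union bound over as many as $\circuitsize$ roots, destroying the constant success probability. I would sidestep this by observing that detection is triggered by the failure of \emph{any single} maintained corruption, so it suffices for one corrupted root's sub-DAG $D_r$ to reach size zero; the event analysed above already has probability at least $1/2$ from this single root, with no union bound over roots. The remaining quantitative bookkeeping is identical to that of Fact~\ref{fact:shrinktozero}: each round is an independent success with probability at least $1/4$ (a good receiver with probability $1/2$ times logarithmic shrinkage with probability at least $1/2$ by Lemma~\ref{lem:shrinkingDeceptionDAG}), at most $2\log^*\circuitsize$ successes suffice (using Corollary~\ref{corr:half} to eliminate the final constant-size remnant once Lemma~\ref{lem:shrinkingDeceptionDAG} ceases to apply), and a Chernoff bound over the $\realchkrows$ rounds keeps the failure probability below $1/2$. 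I would close by noting that $\asymchkrows$ matches the round budget of \advancedcheck, which completes the reduction.
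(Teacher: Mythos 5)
Your proposal is correct and takes essentially the same route as the paper's own proof: both reduce the multi-output case to Fact~\ref{fact:shrinktozero} applied to a single-rooted sub-DAG of size at most $\circuitsize$ rooted at a corrupted output, using Fact~\ref{fact:supergraph} for monotonicity across rounds. The only divergence is in how multiple corrupted outputs are dispatched---the paper argues that the \emph{maximum} rooted sub-DAG dominates the others, so all rooted components vanish by the time it does, whereas you observe that the vanishing of \emph{any one} rooted component already triggers detection---but both versions avoid a union bound over roots and close with the same probability-$1/2$ guarantee in $\asymchkrows$ rounds.
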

\begin{proof}
Let $D$ be a deception DAG with multiple outputs and the adversary corrupts more than one output.
Now let $D_m$ be the maximum subgraph of $D$, that is rooted at one party of a corrupted output. Note that by definition any maximal subgraph that is rooted at one party of a corrupted output is a deception DAG. By Fact \ref{fact:supergraph}, each of these deception DAGs never expands over rounds.
By Fact \ref{fact:shrinktozero}, each of these deception DAGs shrinks to size zero in a number of rounds at most the number of rounds that $D_m$ shrinks to size zero.

Now we consider the case where the adversary maintains more than one deception DAG in the same round. By Fact \ref{fact:supergraph} and the definition of deception DAG, the deception DAGs of the same round do not overlap, and they shrink independently. 
Also, by Fact \ref{fact:shrinktozero}, any maximum deception DAG, of one root, shrinks to size zero in a number of rounds that is at least the number of rounds that any other deception DAG, of one root, shrinks to size zero. 

Therefore, for the adversary to maximize the expected number of rounds in \advancedcheck is to consider the maximum deception DAG, of one root, in the first round of \advancedcheck. 
We know that this maximum deception DAG has size at most $\circuitsize$. By Fact \ref{fact:shrinktozero}, it shrinks to size zero in $\asymchkrows$.
\qed
\end{proof}

The next lemma shows that \advancedcheck catches corruptions with probability $\geq 1/2$.

\begin{lemma}\label{l:check}
Assume some party selected uniformly at random in the last call to \computecircuit has corrupted a computation.  Then when the algorithm \advancedcheck is called, with probability at least $1/2$, some party will call \update.
\end{lemma}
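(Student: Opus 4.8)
The plan is to reduce this lemma almost entirely to the shrinking result already established for deception DAGs, namely Lemma~\ref{lem:any-deception-dag-shrinks-to-zero}, after carefully tying the abstract notion of a ``corrupted computation'' to the deception-DAG machinery. First I would argue that the hypothesized corruption forces the adversary to maintain a deception DAG throughout the execution of \advancedcheck. Concretely, a corruption in the last call to \computecircuit means at least one good party received either a dropped or an altered message. For this discrepancy to go unnoticed while \advancedcheck recomputes, every message delivered to a good party in every round must agree with what that good party saw before; by the verification rule noted after Algorithm~\ref{a:check}, any good party observing a mismatched or missing message immediately calls \update. As explained in the paragraphs preceding Fact~\ref{fact:supergraph}, the only way for the adversary to keep all good parties consistent is to reproduce the same corruption through a connected subgraph of bad checkers, i.e.\ a deception DAG. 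Hence in the first round of \advancedcheck there is a deception DAG rooted at a party holding a corrupted output, and since the circuit has only $m$ gates this DAG has size at most $m$.

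Next I would invoke the shrinking machinery directly. By Fact~\ref{fact:supergraph} the deception DAG never expands from one round to the next, so across the $\realchkrows$ rounds it forms a non-increasing sequence of subgraphs. By Lemma~\ref{lem:any-deception-dag-shrinks-to-zero}, any deception DAG of size at most $m$ shrinks to size zero within $\asymchkrows$ rounds with probability at least $1/2$. I would emphasize that this factor of $1/2$ already folds in the event that the receiver elected by the output quorum in the decisive round is good (this is exactly how the per-round success probability $1/4$ arose in Fact~\ref{fact:shrinktozero}), so no separate accounting of the receiver's status is needed here.

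Finally I would close the loop between ``shrinks to size zero'' and ``some party calls \update.'' Once the deception DAG reaches size zero, the formerly corrupted output is produced by a good checker on correct inputs, so the good party expecting that output now receives a value that conflicts with what it received in an earlier round (or receives a message where previously one was dropped); by the verification rule of \advancedcheck it then calls \update. Symmetrically, if at any earlier round the adversary fails to reproduce the corruption exactly, a good party again detects an inconsistency and calls \update. In either case, conditioned on the deception DAG reaching size zero within the $\realchkrows$ rounds --- an event of probability at least $1/2$ --- some party calls \update, which is precisely the claim.

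I expect the main obstacle to be the first step rather than the probabilistic core: the shrinking is delivered wholesale by Lemma~\ref{lem:any-deception-dag-shrinks-to-zero}, so the delicate part is the deterministic argument that an \emph{arbitrary} corruption necessarily manifests as a deception DAG of size at most $m$ in the first round, together with the converse that the DAG reaching size zero is genuinely detectable by some good party. Making ``consistency across rounds forces a persistent connected bad subgraph'' fully rigorous --- and in particular ruling out the adversary silently re-routing the corruption into a fresh part of the circuit without any good party noticing a cross-round discrepancy --- is where the real care is required.
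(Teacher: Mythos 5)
Your proposal is correct and follows essentially the same route as the paper: both reduce the claim to Lemma~\ref{lem:any-deception-dag-shrinks-to-zero} via the observation that an undetected corruption forces the adversary to maintain a deception DAG across rounds, which shrinks to size zero with probability at least $1/2$, at which point the cross-round verification rule causes some good party to call \update. The only difference is presentational: the paper applies the shrinking lemma separately to each of the three phases (\request, \recompute, \resendresult) together with the round-number bookkeeping, whereas you fold everything into a single persistent deception DAG --- which, if anything, composes the probability bound more cleanly.
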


\begin{lemma}\label{l:update}
If some party selected uniformly at random in the last call to \computecircuit or \advancedcheck has corrupted a computation, then \update will identify a pair of neighboring quorums $Q$ and $Q'$ such that at least one pair of parties in these quorums is in conflict and at least one party in such pair is bad.
\end{lemma}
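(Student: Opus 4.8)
The plan is to exploit that \update begins with \investigate, which publishes every participant's full log of messages sent \emph{and} received --- over the last call to \computecircuit and over all rounds of \advancedcheck --- to its own quorum and to all neighboring quorums. The single driving fact is that good parties are truthful and protocol-abiding. Hence for any scheduled edge $q_x \to q_y$, if $q_x$'s logged value sent to $q_y$ disagrees with $q_y$'s logged value received from $q_x$ (or either omits the message), then at least one of the two lied and is therefore bad; and since $q_x$ was scheduled to send to $q_y$, the two lie in neighboring quorums. So ``at least one party is bad'' and ``neighboring quorums'' come for free, and the task reduces to proving that the public logs must contain such an adjacent disagreement, or an equally localizable inconsistency.

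To locate one I would use the deception DAG $D$ through which the corruption passed. Every edge entering $D$ carries the correct value supplied by a good party, while at least one edge leaving $D$ delivers a dropped or corrupted value to a good party $q_y$; let $q_x \in D$ be the bad party on such an edge. The easy cases are read off $q_x$'s log on $(q_x, q_y)$: if $q_x$ dropped the message its log either omits it (a missing scheduled message in $q_y$'s log) or fabricates one (disagreeing with $q_y$'s empty receipt), and if $q_x$ delivered a corrupt value but logs a different one, $q_y$'s truthful receipt disagrees. Each of these is an adjacent disagreement witnessed by the good $q_y$, and we are done.

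The hard case is when $q_x$ is \emph{internally consistent}, logging exactly the corrupt value it delivered. I would trace the corrupt value backward through $D$, at each edge comparing the predecessor's logged-sent with the successor's logged-received; any disagreement is the conflict we seek. If none occurs, the corrupt value is carried by faithfully logged transmissions back toward the inputs, where the entering edges of $D$ are pinned to \emph{correct} values by good providers. Thus the value must first turn corrupt at a single bad party that faithfully receives correct inputs yet faithfully forwards a corrupted output --- a purely local gate violation exposing no message mismatch. To catch it I would invoke the multi-round structure of \advancedcheck: by Lemma~\ref{lem:any-deception-dag-shrinks-to-zero} some later round recomputes this very gate with a \emph{good} checker, whose correct output diverges from the bad value; tracing both deliveries forward to the good detector (guaranteed by Lemma~\ref{l:check}) and backward to their last common value localizes the fault to a gate where a bad checker's logged output provably differs from the correct one, once more giving an adjacent conflict with a bad member.

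I expect this internally-consistent, gate-violating case to be the main obstacle, precisely because the adversary can make every edge's logs agree and corrupt only the local gate relation, hiding from any pairwise message check. The crux to nail down is that the logs are \emph{doubly anchored} --- good input-providers fix the entering values and the good detector fixes the corrupt exit --- so no coherent chain of faithful transmissions can bridge them, and the cross-round recomputation by a good checker supplies the correct counter-value that exposes the offending gate. A final detail is that the resulting conflict is read off the public logs by the good parties of the incident quorums, so its detection never depends on a bad party reporting against itself.
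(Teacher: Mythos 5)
Your setup is right and matches the paper's: \investigate publishes every participant's full send/receive log, good parties log truthfully, so any disagreement across a scheduled edge $q_x \to q_y$ yields a conflicting pair in neighboring quorums with at least one bad member. You also correctly isolate the hard case: a bad party that is \emph{internally consistent}, logging the correct inputs it received and the corrupt output it actually sent, so that no edge incident to it shows a log mismatch. The gap is in how you resolve that case. You appeal to the multi-round structure of \advancedcheck and Lemma~\ref{lem:any-deception-dag-shrinks-to-zero} to have a good checker recompute the gate in a later round, and then claim this yields ``an adjacent conflict.'' It does not: being \emph{in conflict} (see the definition under \markconflicts) requires a scheduled sender--receiver pair whose logs disagree about a transmitted message. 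A downstream good party that receives the corrupt value from the round-$i$ bad checker and the correct value from a round-$i'$ good checker sees an inconsistency --- which is exactly what triggers the call to \update (Lemma~\ref{l:check}) --- but both senders truthfully log precisely what they sent, so neither edge exhibits a mismatch and no pair is in conflict under the definition. Your argument thus re-detects the corruption but does not produce the conflicting pair the lemma demands. Moreover, \update runs no further rounds of recomputation, and for a corruption confined to \computecircuit (a single bad leader, round $0$) the cross-round machinery is not available in the form you invoke it.

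The paper closes this case far more directly: because \investigate publishes each party's received inputs \emph{and} sent outputs, and the gate function is public, the party's own quorum simply recomputes the gate on the logged inputs and compares with the logged output; an internally consistent gate violator is thereby exposed as bad by its quorum without any edge mismatch. The remainder is a clean contradiction: if every logged gate computation verifies and no scheduled edge shows a disagreement, then correct values propagate intact through the alleged deception DAG, so the party $q'$ that initiated \update received a correct, consistent output --- contradicting the fact that $Q'$ verified $q'$'s inconsistent messages before \update proceeded. Replace your cross-round step with this direct verification of the gate relation from the published logs and the proof goes through.
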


The next lemma bounds the number of calls to \update before all bad parties are marked.

\begin{lemma}\label{l:numc}
\update is called $O(t)$ times before all bad parties are marked.
\end{lemma}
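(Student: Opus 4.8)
The plan is to charge each \update call to a fresh \emph{marking} of a bad party and then bound the total number of such markings by $O(t)$ through a potential-function argument that absorbs the reset (unmarking) rule. By Lemma~\ref{l:update}, every call to \update marks a conflicting pair that contains a bad party, and both members of any conflicting pair acted as leaders or checkers in the preceding \computecircuit or \advancedcheck --- roles that only \emph{unmarked} parties play. Hence the bad member is unmarked when \update runs (using that we serialize executions, so nothing marks it in between), and its marking is a genuine unmarked-to-marked transition. Therefore the number of \update calls is at most $M_b$, the total number of such bad transitions over the whole run. Since for each bad party its markings exceed its unmarkings by at most one, $M_b = C_b + U_b \le t + U_b$, where $U_b$ is the total number of bad \emph{unmarkings} and $C_b \le t$ is the final count of marked bad parties. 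So it suffices to show $U_b = O(t)$.

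The only source of unmarkings is the reset that fires when a quorum reaches a $(\tfrac12-\gamma)$-marked fraction, and the central quantitative fact is that a reset removes strictly more good than bad parties. Since $t \le (\tfrac14-\epsilon)n$, a Chernoff bound over the assignment of parties to quorums gives, with high probability and simultaneously for all quorums, that every quorum of size $s$ holds at most $(\tfrac14-\epsilon')s$ bad parties for a constant $\epsilon' = \epsilon'(\epsilon) > 0$. At a reset the quorum has at least $(\tfrac12-\gamma)s$ marked members, of which at most $(\tfrac14-\epsilon')s$ are bad and hence at least $(\tfrac14-\gamma+\epsilon')s$ are good; choosing the slack $\gamma < \epsilon'$ (the value $\gamma=0.01$ in the algorithm is only illustrative and should be tuned to $\epsilon$) makes the per-reset ratio of bad to good unmarkings at most a constant $\rho := (\tfrac14-\epsilon')/(\tfrac14-\gamma+\epsilon') < 1$. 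Summing over all reset events yields $U_b \le \rho\, U_g$, where $U_g$ is the total number of good unmarkings, and trivially $U_g \le M_g$ with $M_g$ the total number of good markings.

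It remains to bound the good markings against the bad markings and close the loop. The intended vehicle is a potential of the form $\Phi = C_b - \lambda\, C_g$, with $\lambda$ chosen in the window $[\rho,1]$ opened up by the gap of the previous paragraph, so that a reset cannot decrease $\Phi$; combined with $\Phi \le t$ this would give $M_b = O(t)$. \emph{The main obstacle is the \update step}: a single bad leader can lie in \investigate about the outputs it sent to each of its (up to four, by the bounded in- and out-degree of the circuit) neighbors, and thereby be placed in conflict with several good parties in one call, so a naive global count does \emph{not} give $M_g \le M_b$ and a careless $\Phi$ can drop on a call. The way I would resolve this is by exploiting \emph{locality}: the good parties whose marking fills a quorum $Q$ are members of $Q$ acting as its leaders or checkers, while by Lemma~\ref{l:update} each such good party is marked only as the partner of a bad leader or checker of a \emph{neighboring} gate-quorum, which is therefore marked and later recycled in its own quorums rather than in $Q$. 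Turning this separation into a charging scheme that keeps the per-call multiplicity effectively below $1/\rho$ for every $\epsilon > 0$, and doing so while every quorum's bad fraction is simultaneously controlled by the Chernoff bound, is where the real work of the proof lies.
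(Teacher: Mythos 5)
Your proposal is not a complete proof: you explicitly stop at the step that the whole argument hinges on. After the (correct) reductions $M_b \le t + U_b$ and $U_b \le \rho\, U_g \le \rho\, M_g$, everything depends on bounding the number of good markings $M_g$ in terms of the bad markings $M_b$ with a multiplicative constant $c$ satisfying $c\rho < 1$. Since your $\rho = (\tfrac14-\epsilon')/(\tfrac14-\gamma+\epsilon')$ is arbitrarily close to $1$ for small $\epsilon'$, even a multiplicity of two good parties marked per bad party marked destroys the recursion, and you yourself observe that a single bad leader lying in \investigate can be placed in conflict with several good neighbors in one call. The ``locality'' idea you sketch (that the good parties marked alongside a bad party of quorum $Q'$ live in neighboring quorums rather than in $Q'$ itself) is not developed into a charging scheme, and as stated it does not obviously help: the global counts $U_b$ and $U_g$ do not care in which quorum a reset fires, so separating where the good markings accumulate does not by itself reduce the ratio $M_g/M_b$. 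So the proof has a genuine, acknowledged gap at its crux.

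For comparison, the paper's proof is the ``rolled-up'' version of your accounting: it tracks the single potential $f(b,g) = b - \frac{p}{1-p}\,g$, shows that a quorum reset cannot decrease $f$ (your step 2, with the ratio $\frac{p}{1-p}$ in place of your $\rho$), and shows that each caught corruption increases $f$ by at least $\frac{1-2p}{1-p} > 0$ \emph{by asserting that $b$ increases by at least $1$ while $g$ increases by at most $1$ per call} --- which is precisely the per-call multiplicity bound you decline to assume. In other words, you have correctly isolated the one claim that makes the lemma work (and that the paper states without justification), but you have not proved it; your write-up is an accurate diagnosis of where the difficulty lies rather than a proof of the lemma. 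To finish along your lines you would either need to argue that \markconflicts can be restricted to marking a single conflicting pair per \update call (which suffices for the lemma and makes $M_g \le M_b$ immediate), or prove a constant-multiplicity bound strong enough that $c\rho < 1$, which your current choice of $\rho$ does not permit.
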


\section{Conclusion and Future Work}\label{sec:conclusion}
We have presented algorithms for reliable multiparty computations. 
These algorithms can significantly reduce message cost and number of computational operations to be asymptotically optimal.
The price we pay for this improvement is the possibility of computation corruption.  In particular, if there are $t\le (\frac{1}{4}-\epsilon)n$ bad parties, for any constant $\epsilon > 0$, our algorithm allows $O(t \chkprobinv)$ computations to be corrupted in expectation.  

Many problems remain.  
First, it seems unlikely that the smallest number of corruptions allowable by an attack-resistant algorithm with optimal message complexity is $O(t \chkprobinv)$.  Can we improve this to $O(t)$ or else prove a non-trivial lower bound?  
Second, we allow the inputs of parties to reveal. Can we maintain the privacy of these inputs?
Third, we assume a partially synchronous communication model, which is crucial for our \advancedcheck algorithm to detect computation corruptions over rounds. Can we extend this algorithm to fit for asynchronous computations?

\bibliographystyle{plain}
\bibliography{compute}

\newpage

\appendix
\section{Appendix - Deferred Proofs}\label{app:proofoftheorem}

\noindent
{\bf Fact \ref{fact:supergraph}.} 
The deception DAG in any round is a supergraph of all subsequent deception DAGs.
\begin{proof}
We know by definition that the deception DAG is bordered by the good parties that provide the inputs to the DAG, and the good parties that receive the outputs from the DAG.

In each round, all parties of each subquorum in round $i$ send their outputs to the new added party in the next subquorum. Thus, the good parties that provide the correct inputs to the deception DAG of round $i$, will provide the correct inputs to the deception DAGs in all subsequent rounds.
Moreover, in each round, the new added party in each subquorum forwards its output to all parties in its subquorum.
Note that each good party has previously received a message, it verifies this message with all subsequent messages it receives, and if it receives inconsistent messages or fails to receive an expected message, then it calls \update.

Therefore, all good parties that border a deception DAG in any round will border all subsequent deception DAGs.\qed
\end{proof}

\noindent
{\bf Lemma \ref{l:check}.} 
Assume some party selected uniformly at random in the last call to \computecircuit has corrupted a computation.  Then when \advancedcheck is called, with probability at least $1/2$, some party will call \update.

\begin{proof}
Recall that in \request, in each round for $\asymchkrows$ rounds, a new party $\receiver$ is elected by the output quorum to send a request to the senders in order to recompute. 
By Lemma \ref{lem:any-deception-dag-shrinks-to-zero}, this request is sent reliably with probability at least $1/2$ to all input quorums.
Note that each request has a round number. 
Thus, at any round, if any good party in any input quorum receives a request of round number $i$ and has not received $(i-1)$ requests of proper round numbers, then it calls \update.

If all input quorums receive all requests properly in all $\asymchkrows$ rounds, then \recompute must be called properly $\asymchkrows$ times by all input quorums.
By Lemma \ref{lem:any-deception-dag-shrinks-to-zero}, the result is computed and sent reliably to the output quorum with probability at least $1/2$.

Recall that in \recompute, the round number $i$ is forwarded with the computation results from the senders to the output quorum. Thus, at any round, if any good party in the output quorum receives a result with a round number $i$ and has not received $(i-1)$ results with proper rounds numbers, then it calls \update.

Now, if all parties in the output quorum receive all results properly in all $\asymchkrows$ rounds, then \resendresult must be called $\asymchkrows$ times by the output quorum.
By Lemma \ref{lem:any-deception-dag-shrinks-to-zero}, the result of the computation is sent back reliably to all senders with probability at least $1/2$.
Thus, the probability that \advancedcheck succeeds in finding a corruption and calling \update is at least $1/2$. \qed
\end{proof}

\noindent
{\bf Lemma \ref{l:update}.} 
If some party selected uniformly at random in the last call to \computecircuit or \advancedcheck has corrupted a computation, then \update will identify a pair of neighboring quorums $Q$ and $Q'$ such that at least one pair of parties in these quorums is in conflict and at least one party in such pair is bad.

\begin{proof}
First, we show that if a pair of parties $x$ and $y$ is in conflict, then at least one of them is bad.  Assume not. 
Then both $x$ and $y$ are good.  
This implies that party $x$ would have truthfully reported what it received and sent; any result that $x$ has computed would have been sent directly to $y$; and $y$ would have truthfully reported what it received from $x$.  But this is a contradiction, since for $x$ and $y$ to be in conflict, $y$ must have reported that it received from $x$ something different than what $x$ reported sending.

Now consider the case where a selected unmarked bad leader corrupted the computation in the last call to \computecircuit. By Lemma \ref{l:check}, with probability at least $1/2$, some party, $q' \in Q'$, will call \update.
Recall that in \update $q'$ broadcasts all messages it has received to all parties in $Q'$. These parties verify if $q'$ received inconsistent messages before proceeding.

In \update, we know that each party, $q \in Q$, participated in the last call to \compute broadcasts what it has received and sent to all parties in $Q$. Thus, all parties of $Q$ verify the correctness of $q$'s computation. Thus, if the corruption occurs due to an incorrect computation made by a bad party, this corruption will be detected and all parties will know that this party is bad.

Now if all parties compute correctly and \advancedcheck detects a corruption, then we show that there is some pair of parties will be in conflict.
Assume this is not the case. 
Thus, by the definition of corruption, there must be a deception DAG, in which all inputs are provided correct and an output is corrupted at party $q'$.
Then each pair of parties, $(q_{j}, q_{k}) \in (Q_{j}, Q_{k})$, in the deception DAG that is rooted at $q'$, is not in conflict, for $n+1\leq j<k \leq m+n$. 
Thus, we have that 1) this DAG received all its inputs correct; 2) all parties compute correctly; and 3) no pair of parties is in conflict. This implies that it must be the case that $q'$ received the correct output.
But if this is the case, then $q'$ that initially called \update would have received no inconsistent messages.  This is a contradiction since in such a case, this party would have been unsuccessful in trying to initiate a call to \update.  
Thus, \update will find two parties that are in conflict, and at least one of them will be bad.
\qed
\end{proof}

\noindent
{\bf Lemma \ref{l:numc}.} 
\update is called $O(t)$ times before all bad parties are marked.
\begin{proof}
By Lemma~\ref{l:update}, if a corruption occurred in the last call to \computecircuit, and it is caught by \advancedcheck, then \update is called. \update identifies at least one pair of parties that is in conflict, and each of such pairs has at least one bad party.

Now let $g$ be the number of marked good parties, and let $b$ be the number of marked bad parties.
Also let $f(b, g) = b - (\frac{p}{1-p})g$. Since $0< p \leq \frac{1/2}{1+2\gamma}$, for any constant $\gamma > 0$, $0< \frac{p}{1-p} \leq \frac{1}{1+4\gamma}$.

For each corruption caught,
at least one bad party is marked, and so $f(b, g)$ increases by at least $\frac{1-2p}{1-p}$ since $b$ increases by at least $1$ and $g$ increases by at most $1$.
When $(1/2-\gamma)$-fraction of parties in any quorum $Q$ get unmarked, for any constant $\gamma > 0$, 
$f(b, g)$ further increases by at least $0$ since $b$ decreases by at most $p|Q|(1/2-\gamma)$ and $g$ decreases by at least $(1-p)|Q|(1/2-\gamma)$.
Hence, $f(b, g)$ is monotonically increasing by at least $\frac{1-2p}{1-p}$ for each corruption caught.
When all bad parties are marked, $f(b, g) \leq t$.
Therefore, after at most $(\frac{1-p}{1-2p})t$, or at most $(1+\frac{1}{2\gamma})t$, calls to \update, all bad parties are marked.\qed
\end{proof}

\noindent
{\bf Proof of Theorem \ref{thm:corruptions}.}
We first show the message cost, the number of operations and the latency of our algorithms.
By Lemma \ref{l:numc}, the number of calling \update is at most $O(t)$. Thus, the resource cost of all calls to \update is bounded as the number of calls to \compute grows large.
Therefore, for the amortized cost, we consider only the cost of the calls to \computecircuit and \advancedcheck.

When a computation is performed through a circuit of $m$ gates with a circuit depth $\ell$, \computecircuit has message cost $O(m + n\log{n})$, number of operations $O(m + n\log{n})$ and latency $O(\ell)$.
\advancedcheck has message cost $O((m + n \log{n})\chkprobinv)$, number of operations $O((m+n\log{n})\logstarcircuitsize)$ and latency $O(\ell \logstarcircuitsize)$, but \advancedcheck is called only with probability $1 / \chkprobinv$. 
Hence, the call to \advancedcheck has an amortized expected message cost $O(m + n \log{n})$, amortized computational operations $O(\frac{m+n\log{n}}{\logstarcircuitsize})$ and an amortized expected latency $O(\ell/ \logstarcircuitsize)$.

In particular, if we call \compute $\mathcal{L}$ times, then the expected total number of messages sent will be $O(\mathcal{L} (m + n\log n) + t(m \log^{2} n))$ with expected total number of computational operations $O(\mathcal{L} (m + n\log n) + t(m\log{n}\log^*{m}))$ and latency $O(\ell(\mathcal{L}+t))$.  This is true since \update is called $O(t)$ times and each call to \update has message cost $O(m\log^2{n})$ with computational operations $O(m\log{n}\log^*{m})$ and latency $O(\ell)$.

Recall that by Lemma~\ref{l:numc}, the number of times \advancedcheck must catch corruptions before all bad parties are marked is $O(t)$.  
In addition, if a bad party caused a corruption during a call to \computecircuit, then by Lemmas~\ref{l:check} and~\ref{l:update}, with probability at least $1/2$, \advancedcheck will catch it.  
As a consequence, it will call \update, which marks the parties that are in conflict.  $\update$ is thus called with probability $1/\chkprobinv$, so the expected total number of corruptions is $O(t \chkprobinv)$.

\end{document}